\documentclass[a4paper,10pt,twoside%,twocolumn
]{article}
\pdfoutput=1

\usepackage{geometry}
\geometry{a4paper,left=20mm,right=20mm, top=15mm, bottom=2cm}

\usepackage[utf8]{inputenc}
\usepackage[T1]{fontenc}
\usepackage[english]{babel}

\usepackage{cite}
\usepackage{url}

\usepackage{xpatch}
\usepackage{amsmath}
\usepackage{amsthm}
\usepackage{amssymb}
\usepackage{dsfont}             % \mathds{1}            # doublestroke packages
\usepackage[caption=false,font=footnotesize] % necessary options to make captions IEEE compliant
{subfig}

\usepackage{graphicx}

\usepackage{mathtools}
\mathtoolsset{centercolon=true,%
              mathic=true
}
\interdisplaylinepenalty=2500           % Restore automatic page break behaviour of IEEEtrans

\usepackage{algorithm}                  % Provides algorithm environment
\usepackage{algpseudocode}              % Provides algorithmic environment (within algorithm env.)

\usepackage{hyperref}

% Increase algorithmic line spacing
% http://tex.stackexchange.com/questions/84816/how-to-modify-line-spacing-in-algorithm-algpseudocode
\makeatletter
\xpatchcmd{\algorithmic}{\itemsep\z@}{\itemsep=0.15ex}{}{}
\makeatother

%% Begin Macros.tex --------------------------------------------------------------------------------
%%% Special symbols
%% Structure

\newcommand{\lra}{\xRightarrow[\phantom{bla}]{}}

%% Number ranges
\newcommand{\C}{\mathbb{C}}
\newcommand{\N}{\mathbb{N}}
\newcommand{\lN}[1]{$\ell_{#1}$}

%% Others

\newcommand{\muW}{\mu_\text{Welch}}

\newcommand{\snr}{\text{SNR}}

\newcommand{\ti}[1]{\tilde #1}

\newcommand{\II}{\mathds{1}}

\newcommand{\eOMP}{\hbox{$\epsilon$-OMP}}

\newcommand{\Fr}{\textit{F\_rand}}
\newcommand{\Fc}{\textit{F\_consecutive}}
\newcommand{\Fcb}{\textit{F\_consecBegin}}
\newcommand{\Fx}{\textit{F\_nXStatBlocks}}
\newcommand{\Rg}{\textit{R\_gauss}}

\newcommand{\figref}[1]{\hbox{Fig.~\ref{#1}}}

%%% Special local environments
\newcommand{\abs}[1]{\left|#1\right|}
\DeclareMathOperator*{\argmax}{argmax}

\newcommand{\bracks}[1]{\left[#1\right]}
\newcommand{\ceil}[1]{\left\lceil#1\right\rceil}
\newcommand{\clos}[2]{\text{clos}_{#1}\left(#2\right)}

\newcommand{\floor}[1]{\left\lfloor#1\right\rfloor}
\newcommand{\norm}[1]{\left\lVert#1\right\rVert}
\newcommand{\paren}[1]{\left(#1\right)}
\newcommand{\setbr}[1]{\left\{#1\right\}}
\newcommand{\scalarP}[2]{\left<#1,#2\right>}

\newcommand{\supp}[1]{\text{supp}\left(#1\right)}

%%% Other stuff

%%% Theorem environments
\theoremstyle{plain}

\newtheorem{thm}{Theorem}

\newtheoremstyle{remarkOwn}% name of the style to be used
{\topsep}%   {spaceabove}% measure of space to leave above the theorem. E.g.: 3pt
{\topsep}%   {spacebelow}% measure of space to leave below the theorem. E.g.: 3pt
{\upshape}%   {bodyfont}% name of font to use in the body of the theorem
{0pt}%   {indent}% measure of space to indent
{\bfseries}%   {headfont}% name of head font
{}%   {headpunctuation}% punctuation between head and body
{0.5em}%   {headspace}% space after theorem head; " " = normal interword space
{}%   {headspec}% Manually specify head
\theoremstyle{remarkOwn}

\newtheorem{concl}[thm]{Corollary}

\newtheoremstyle{definitionOwn}% name of the style to be used
{\topsep}%   {spaceabove}% measure of space to leave above the theorem. E.g.: 3pt
{\topsep}%   {spacebelow}% measure of space to leave below the theorem. E.g.: 3pt
{\upshape}%   {bodyfont}% name of font to use in the body of the theorem
{0pt}%   {indent}% measure of space to indent
{\bfseries}%   {headfont}% name of head font
{}%   {headpunctuation}% punctuation between head and body
{.5em}%   {headspace}% space after theorem head; " " = normal interword space
{}%   {headspec}% Manually specify head
\theoremstyle{definitionOwn}
\newtheorem{df}[thm]{Definition}

%% End Macros.tex ----------------------------------------------------------------------------------

%% Begin Preamble.tex ------------------------------------------------------------------------------
\author{Tobias Birnbaum%
\thanks{Tobias Birnbaum, Department of Electronics and Information Processing (ETRO), Vrije Universiteit Brussel, 1050 Brussels, Belgium, E-mail: 
\hbox{Tobias.Birnbaum@vub.ac.be}.}%
, Yonina C. Eldar (IEEE \textit{Fellow})%
\thanks{Yonina C. Eldar, Department of Electrical Engineering, Technion-Israel Institute of Technology, Haifa, Israel, E-mail: 
Yonina@ee.technion.ac.il.}%
 ,  Deanna Needell (IEEE \textit{Member})%
\thanks{Deanna Needell, Department of Mathematical Sciences, Claremont McKenna College, CA 91711, USA, E-mail: DNeedell@cmc.edu.}%
}
\date{\today}
\title{Tolerant Compressed Sensing With Partially Coherent Sensing Matrices}
\providecommand{\keywords}[1]{\textbf{\textit{Index terms---}} #1}

\pdfinfo{%
  /Title    (Tolerant Compressed Sensing With Partially Coherent Sensing Matrices)
  /Author   (Tobias Birnbaum, Yonina C. Eldar, and Deanna Needell)
  /Creator  (Tobias Birnbaum)
  /Producer ()
  /Subject  ()
  /Keywords (coherence, coherent sensing, compressed sensing, d-coherence, d-tolerant recovery, 
    orthogonal matching pursuit (OMP), redundant sensing matrix,  signal detection
    )
}
%% End Preamble.tex --------------------------------------------------------------------------------

\begin{document}

\setcounter{secnumdepth}{4}
%opening
\maketitle

\begin{abstract}
Most of compressed sensing (CS) theory to date is focused on incoherent sensing, that is, columns from the 
sensing matrix are highly uncorrelated. However, sensing systems with naturally occurring correlations arise in many applications, such as signal detection, motion 
detection and radar. 
Moreover, in these applications it is often not necessary to know the support of the signal exactly, but instead small errors in the support and signal are tolerable. 
Despite the abundance of work utilizing incoherent sensing matrices, for this type of \textit{tolerant} recovery we suggest that coherence is actually 
\textit{beneficial}. We promote the use of coherent sampling when tolerant support recovery is acceptable, and demonstrate its advantages empirically. In 
addition, we provide a first step towards theoretical analysis by considering a specific reconstruction method for selected signal classes.
\end{abstract}

\keywords{  coherence, coherent sensing, compressed sensing, d-coherence, d-tolerant recovery, 
    orthogonal matching pursuit (OMP), redundant sensing matrix,  signal detection.}

%the paper
\section{Introduction}\label{sec:introduction}
Compressed sensing (CS) deals with sampling and recovery of sparse signals\cite{Yonina2012Book, Foucart2013Book, Eldar2015Book}. By using the sparsity structure, 
recovery is possible from far fewer measurements than the signal length. Initial results (e.g. \cite{Donoho1989Uncertainty,Candes2006NearOptimal}) showed that it is 
possible to approximate the NP-hard \lN{0} minimization with optimization problems that have only polynomial complexity, such as \lN{1} minimization or orthogonal 
matching pursuit (OMP) \cite{Tropp2004}.

Although most classical results are in terms of $\ell_2$ or exact support error, we focus here on the notion of \textit{\hbox{$d$-tol}\-erant recovery}, motivated 
by applications such as geophysics and radar\cite{BarIlan2014, 
Eldar2015Book}. By \hbox{$d$-tol}\-erant recovery, we mean signal (or support) recovery in which one tolerates errors in signal spike locations of up to $d$ indices. In 
other words, the position of every non-zero in the reconstructed support can differ up to $d$ indices from the original support. In these applications just mentioned, 
for example, since the scene is typically discretized along a fine grid, one often does not need precise target/event location but rather can tolerate a small amount of 
spatial error.

We demonstrate that we can increase noise robustness by using \hbox{$d$-tol}\-erant recovery and special types of \textit{partially coherent matrices}. 
This is in contrast to the majority of results in CS where incoherent sensing matrices are 
highly desirable - e.g. \cite{Yonina2012Book,Foucart2013Book,Tropp2003Improved,Candes2006NearOptimal,Donoho2006stable,Candes2007Sparsity}. The use of partially coherent 
sensing matrices provides a new avenue to pursue for applications where such matrices arise naturally and/or where small errors are acceptable. 
Typical applications are reconstructions of multi-band signals \cite{Mishali2009},
unions of subspaces \cite{Eldar2009},
signal/image processing such as super-resolution \cite{Chen2014} or face recognition algorithms \cite{Wright2009}.

% Structure & Content

{\bfseries Contribution.}
Our goal is to introduce the notion of \hbox{$d$-tol}\-erant recovery and demonstrate that partially coherent matrices are beneficial in this context. We view our main 
contribution as two-fold: (i) we demonstrate that if an application requires the use of coherent sampling, then \hbox{$d$-tol}\-erant recovery \textit{is still 
possible}, and moreover (ii) that if the desired outcome is actually a tolerant recovery, then one actually \textit{should} use coherent sampling.  To our best 
knowledge, these phenomena have not been adequately observed, explored or studied, except for preliminary work in the thesis of Bar-Ilan \cite{BarIlan2014Thesis}, which 
is the motivation of our work here. We demonstrate these ideas through empirical results and also establish a foundation for theoretical guarantees under specific 
(non-optimal) assumptions.

{\bfseries Organization.}
The structure of the paper is as follows: In Section~\ref{sec:motivation} we motivate \hbox{$d$-tol}\-erant recovery and point out links to related work. 
Section~\ref{sec:concepts_and_definitions} provides a problem formulation and definitions necessary to capture \hbox{$d$-tol}\-erant theory. We present numerical 
simulation results comparing incoherent and partially coherent sensing matrices in Section~\ref{sec:numerical_simulation_results}. In Section~\ref{sec:theorems} we 
provide initial analytical justification for our observations under the assumption of sufficiently spread signal support using a variant of OMP \cite{Tropp2004}. 
 The work is concluded with a summary and 
outlook in Section~\ref{sec:conclusion_and_future_directions}.

{\bfseries Notation.}
For a positive integer $N$ we write $[N]$ to denote the set $\{1, 2, \ldots, N\}$.
The norms $\norm{\cdot}_p, p\in[1,\infty]$ refer to the vector norms in \lN{p} or the induced matrix norms. The number of non-zeros of a vector is denoted as 
$\abs{\cdot}_0$. Lower case Greek letters name the columns of the respective matrix. The $N$th order Fourier matrix is denoted as $F_N$. An $S$-sparse signal $x\in\C^N$ 
has exactly $S$ non-zeros. The reconstruction of $x$ from linear measurements $y\in\C^M$ is termed $\ti{x}\in\C^N$. We set $\Sigma:=\supp{x}$, $\Gamma:=\supp{\ti{x}}$ 
and always have $0<S\leq M \ll N$.

\section{Motivation}\label{sec:motivation}
\subsection{d-tolerant recovery}\label{sec:sub_d_tolerant_recovery}
We consider a \hbox{$d$-tol}\-erant recovery of an unknown signal $x$ from measurements $y$ given by the linear sensing model 
\begin{equation}
  y = \Phi x + e, \label{eq:linear_sensing_model_w_noise}
\end{equation}
with sensing matrix $\Phi\in\C^{M\times N}$ and measurement noise \hbox{$e\in\C^M$}. We assume that the vector $x$ is $S$-sparse, namely, $\abs{x}_0=S$. 

We postpone until Section \ref{sec:concepts_and_definitions} a formal definition of this tolerance, but informally we mean recovery which tolerates errors in the support set of up to $d$ indices. This aligns with applications in which the signal spikes refer to e.g. spatial locations, and one tolerates identified locations within $d$ units of actual locations.
Specifically, we seek a \hbox{$d$-tol}\-erant recovery of $x$ with $0<d$, $0<S\leq M \ll N$. For simplicity and to preserve the clarity of illustration we focus on 
\hbox{$d$-tol}\-erant recovery for the well known example of Fourier sensing matrices, although extensions to other settings are straightforward. Below, we construct 
several sensing matrices and investigate their performance in \hbox{$d$-tol}\-erant recovery.

Before proposing our coherent sampling approach and showing our results, we first mention some simple alternatives to tolerant recovery, along with their models.  We 
will use these models for testing purposes in later sections. Note that \hbox{$d$-tol}\-erant recovery aims to recover spike locations up to a spatial tolerance of $d$ 
indices.  A related but simpler viewpoint would group the coefficients of the signal into bins, each of size $d$, and hope to identify which bins contain spikes.  
Therefore, the most basic model for tolerant recovery would be to use an appropriate subsampled sensing matrix. This can be done in several ways, which we outline here. 
In all cases we aim for a measurement vector $y\in\C^M$. To be concrete, to downsample a vector $x\in\C^N$ to one in $\C^M$, we apply a downsampling matrix whose rows 
consist of single blocks of $N/M$ $1$s (and the rest zero). To upsample we simply pad the signal with zeros such that each entry of a downsampled block is mapped to 
the center of that block. We refer to these operations by $D$ and $U$, respectively. We denote by $e$ a noise vector of appropriate dimension.
\begin{itemize}
%\item {\bfseries Subsampling version 1} Consider an $M\times N$ submatrix $F$ of the DFT $F_N$, constructed in any manner (typically in CS it will be constructed by randomly selecting $M$ rows, or in our case, by selecting a consecutive block of $M$ rows).  Create the subsampled matrix obtained from $F$ by keeping only every $d$th column, to obtain a subsampled $M\times \lceil N/d\rceil$ matrix. Reconstruct using a classical CS reconstruction method.
\item {\bfseries Subsampling on coarse grid:} Consider an $ \lceil N/d\rceil\times  \lceil N/d\rceil$ DFT matrix $F_{N/d}$.  Create the subsampled matrix obtained 
from $F_{N/d}$ by subsampling $M$ rows (as in any fashion described above). Reconstruct a vector $x$ of length $\lceil N/d\rceil$ using a classical CS reconstruction 
method.
\end{itemize}
For naive comparisons, we also consider two other scenarios. 
\begin{itemize}
\item {\bfseries Downsample then sense (DS):} In this case we consider first downsampling the signal $x\in\C^N$ to obtain a signal ${x_M}\in\C^M$.  Then we measure $y = 
F_M{x_M} + e$. To reconstruct, we simply apply $F_M^{-1}$ to the measurements $y$ and then upsample the result to obtain a reconstruction of $x$, $\hat{x} = 
U(F_M^{-1}y)$.
\item {\bfseries Sense then downsample (SD):} Here we first apply $F_N$ to the signal $x\in\C^N$ and then downsample the result to obtain $y = D(F_N x) + e$. To reconstrsuct we first upsample the measurements and then apply the inverse: $\hat{x} = F_N^{-1}(Uy)$.
\end{itemize}

We will see later that in most cases, when tolerant recovery is the goal, coherent sampling with our approach outperforms these simple methods. Of course, in other cases, the application may necessitate the need for coherent sampling, in which case our results show that tolerant recovery is still possible. Before formulating the details of tolerant recovery, we first review some related work.

\subsection{Related work}\label{sec:sub_related_work}
Partially coherent sensing matrices have been studied previously in CS. However, existing work has focused on \textit{exact} support recovery \textit{despite} coherence 
within the sensing matrix. Here, instead, we show that coherence is actually a \textit{resource} when we allow for \hbox{$d$-tol}\-erant recovery.

The literature on OMP related methods using partially coherent sensing matrices can be summarized as follows. 
In \cite{Fannjiang2012} multiple extensions to existing algorithms were formulated. The authors proved and showed numerically that by introducing a band-exclusion 
method they were able to recover signals in a specific sense.  Each non-zero of the original signal has a counterpart in the reconstruction, which is however 
allowed to be located \textit{anywhere}. Thus the "tolerance" would be $d=N-1$. 
Further, a condition related to the ERC\cite{Tropp2004} is required, and the signals are assumed to have support which is spread enough so that coherent columns do not appear in the support indices.
The work \cite{duaBar2013} also considers spread signals, seeking accurate signal recovery and attempting to overcome coherence in the sampling matrix.

In \cite{Eldar2010}, useful concepts such as the distinction between block coherence and sub coherence were developed and applied to 
the recovery of block-sparse signals using the block OMP (BOMP) algorithm.
Correlations were allowed across blocks, but each block itself must be incoherent. The results were refined in a generic manner yielding a block RIP 
in \cite{Wang2011}. The work in \cite{Fu2014} extended this framework to noiseless recovery from partially coherent sensing matrices with a static predefined 
column-block structure, using a block RIP as a necessary requirement. This was done still with the focus on accurate recovery of block sparse signals when the block structure is known a priori.

Along a different line of work, \cite{CP2009} shows that mild coherence in the sensing matrix can be allowed when the signal is modeled as random. In this case, accurate recovery is still possible when the coherence scales like $1/\log(N)$. Here again, in this setting the goal is exact recovery and the coherence is something that needs to be overcome, not something that aids in recovery.

Some results on exact recovery with dictionary sparsity models ($y=\Phi D x$) were derived in \cite{Candes2011Compressed,Giryes2013}.  
The proposed \hbox{D-RIP} condition was defined for the \hbox{\lN{1}-analysis} problem. This condition allows for coherence within the dictionary, $D$, but only $Dx$ is 
the target of the reconstruction; the sensing matrix $\Phi$ is still required to be incoherent. The same is true for the \hbox{\lN{1}-synthesis} problem which 
was treated in \cite{epsOMP} via \eOMP. The presented theoretical results are based on the $\epsilon$-coherence between the sensing matrix $\Phi$ and the partially 
coherent dictionary $D$. A recent surge of work has studied the area of dictionary sparsity models \cite{Giryes2013, Rubinstein2013, NOMP, Chen2014, 
Foucart2016}, all still requiring incoherence of the sensing matrix.

Related to these results but fundamentally different, is the super-resolution problem. In this problem, one only has information about a signal in its low frequency band, and wishes to obtain a higher resolution reconstruction from that data. This can be modeled as a CS problem where the sensing matrix is highly coherent and the signal has a spread out support. Recent work on this problem has shown that several optimization based or greedy methods are successful in accurately recovering these types of signals \cite{Granda2012,Demanet2013,Chen2014}.  Although later we will also consider spread signals, these works are fundamentally different than ours since their goal is exact reconstruction that \textit{overcomes} the coherent sensing, whereas we are promoting the \textit{advantages} of coherence sampling when tolerant detection is the goal.

To our best knowledge, the first observation that coherence in the sensing matrix is not only tolerated but even \textit{beneficial} for tolerant recovery appeared in the thesis of Bar-Ilan \cite{BarIlan2014Thesis}.

\section{Problem formulation and definitions}\label{sec:concepts_and_definitions}
In general, a \hbox{$d$-tol}\-erant recovery will be called successful if every non-zero of the $S$-sparse signal $x$ has a non-zero within the recovery $\ti{x}$ that 
is not further than $d$ indices apart. The success can be measured by the (relative) \hbox{$d$-tol}\-erant support recovery error. We define the $d$-closure of a column 
index~$i$ as
\begin{equation}
  \clos{d}{\setbr{i}}:=\setbr{\max\setbr{i-d,1},\dots,\min\setbr{i+d,N}}\;.
\end{equation}
The (relative) \hbox{$d$-tol}\-erant support recovery error measure is defined as
\begin{align}
  \rho_d\paren{\ti{x}, x} &:= \frac{\sum_{i\in\Sigma} \II\setbr{ \paren{\sum_{j\in\clos{d}{\Gamma}} \delta_{i,j} } > 0}}{S}
    \label{eq:def_dtol_recovery} \;,
\end{align}
with the indicator function $\II$, $S:=\abs{x}_0$, the Kronecker delta $\delta_{i,j}$, $d$-closure of the set $\Gamma$, $\clos{d}{\Gamma}:=\bigcup_{i\in\Gamma} 
\clos{d}{\setbr{i}}$, and other notation defined in the notation section above.

For block sparse signals, which have their non-zeros cumulated in blocks, this usually means that multiple non-zeros are combined to form a single representative for at 
most $(2d+1)$ non-zeros of a block.

The maximal number of non-zeros that can be resolved in a \hbox{$d$-tol}\-erant recovery within a signal of length $N$ is given as:
\begin{equation}
  S_{\text{max}}=\floor{\frac{N-1}{2d+1}}+1\;. \label{eq:maximal_s}
\end{equation}
This is clear from assuming the most advantageous distribution of non-zeros/disjoint $d$-closures. This distribution has a non-zero in the first and the $N$th element 
whereas the other non-zeros are equally spaced with distance $2d+1$.

\subsection{d-coherence}\label{sec:sub_d_coherence}
We base a first analysis of \hbox{$d$-tol}\-erant recovery on the notion of coherence. This measure is computationally tractable and a proxy for other measures such as 
the 
restricted isometry/orthogonality property \cite{Ben-Haim2010}. Furthermore, as opposed to the latter, matrices with a specific coherence structure 
can be easily crafted.

The linear sensing model, \eqref{eq:linear_sensing_model_w_noise}, connects the allowed discrepancy in the indices of the recovered non-zeros to the correlation of 
matrix columns with respect to their index distance.

The correlation of any two columns $\phi_i, \phi_j$ of a matrix $\Phi$ can be expressed as:
\begin{equation}
  \mu(i, j) := \mu(\phi_i, \phi_j) = \frac{\abs{\scalarP{\phi_i}{\phi_j}_2}}{\norm{\phi_i}_2\norm{\phi_j}_2}\,. \label{eq:def_shorthand_mu}
\end{equation}
The overall maximum correlation of matrix columns is captured by the coherence of a matrix.

\begin{df}
  \label{def:mu}
  The coherence of a matrix $\Phi$ is defined as 
    \begin{align}
      \mu(\Phi) := \max_{i\neq j} \mu(\phi_i, \phi_j)\;. \label{eq:def_coherence}
    \end{align}
\end{df}

The Welch bound, $\mu(\Phi)\geq\muW(\Phi):=\sqrt{\frac{N-M}{M(N-1)}}$, is the lowest possible coherence for a \lN{2}-column normalized matrix $\Phi\in\C^{M\times N}$, 
see Theorem 5.7 in \cite{Foucart2013Book}. For the Fourier matrix $\muW$ is obtained through row selection from a cyclic difference set \cite{Yu2010}.
If $\mu(\Phi)$ is close to the Welch bound, we call $\Phi$ incoherent.

To analyze \hbox{$d$-tol}\-erant recovery, we extend the notion of coherence to be made dependent on the column index distance.

\begin{df}
  \label{def:mu_d}
Define the set of $d$-spread coefficients (with wrapping) as $$\Gamma_d := \{(i, j)\in[N]^2 : |i-j|>d, |i-j-N|>d, |i-j+N|>d \}.$$
\noindent Then the $d$-coherence of a matrix $\Phi$ is defined as 
    \begin{align}
      \mu_d(\Phi) := \max_{(i,j)\in\Gamma_d} \mu(\phi_i, \phi_j)\;. \label{eq:def_d_coherence}
    \end{align}
  If $\mu_d(\Phi)$ is close to the Welch bound\footnote{Of course, by effectively removing columns from the calculation of coherence, we expect the Welch bound to be slightly weaker. Since we typically consider $d$ to be much smaller than the other parameters, we leave it as-is for simplicity.} for a certain $d$, we call $\Phi$ $d$-incoherent.  
\end{df}

For $d=0$ the definition of $\mu_d(\Phi)$ coincides with that of the coherence. As $d$ increases $\mu_d(\Phi)$ decreases monotonically. Indeed, suppose $f < d$. Then
\begin{align*}
\mu_d(\Phi) 
  = \max_{(i,j)\in\Gamma_d} \mu(\phi_i, \phi_j)
  < \max_{(i,j)\in\Gamma_f} \mu(\phi_i, \phi_j)
  = \mu_f(\Phi),
\end{align*}
where the inequality holds because every $d$ separated set is also $f$ separated, i.e. $\abs{\Gamma_d} < \abs{\Gamma_f}$ with cardinality $\abs{\cdot}$.

With \hbox{$d$-co}\-herence we can ensure that large column correlations are confined to column indices in the $d$-closure of the reference column. This leads to two 
key aspects for any successful \hbox{$d$-tol}\-erant recovery:
\begin{enumerate}
  \item A large \hbox{$d'$-co}\-herence for all $d' < d$ increases noise stability by increasing the number of "distorted copies" of any reference column.
    \label{item:desire_ti_d_incoherence}
  \item A minimal $\ti{d}$-coherence for all $\ti{d} > d$ ensures reconstruction of any support elements of mutually disjoint $d$-closures.
    \label{item:desire_d_coherence}
\end{enumerate}

\begin{figure}
  \centering
  \includegraphics[width=0.6\columnwidth]{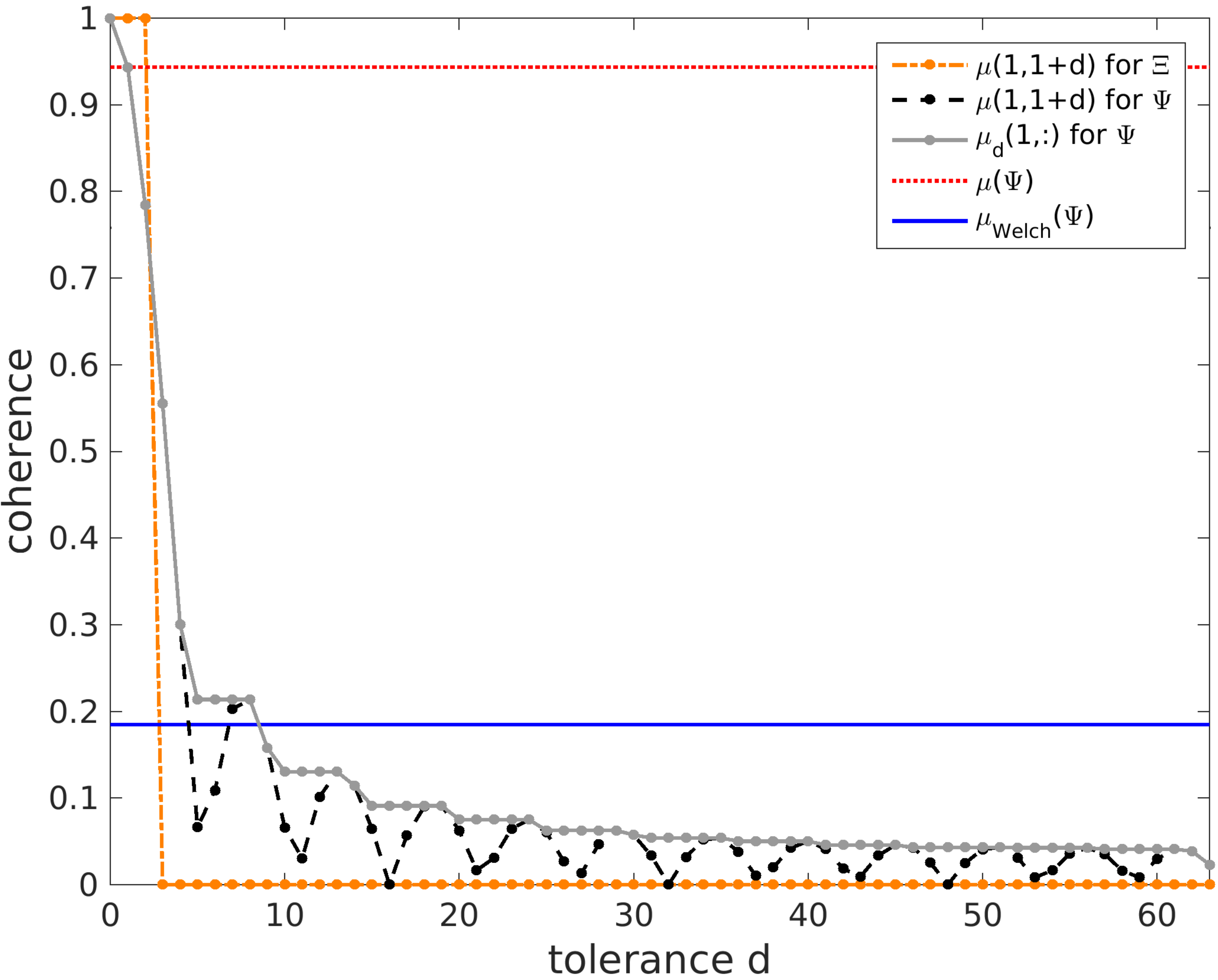}
  \caption{
  The advantageous fast decrease of column correlations, $\mu(i, i+d)$, of the
first and (1+d)th column for $\Psi$ (dashed, black) and $\Xi$ (dash-dotted,
orange) highlights the suitability of those matrices for d-tolerant recovery
which is based on $\mu_d$ (solid, gray; envelope). The matrices are defined in
Section III and were instantiated with $M=24, N=128$. The coherence $\mu(\Psi)$
(dotted, red) and Welch bound $\mu(\Psi)$ (solid blue) are given as references
towards classical CS and the least coherence possible for a matrix of these
dimensions. Symmetric boundary conditions apply for omitted columns.
    \label{fig:correlation_measures}}
\end{figure}

To provide explicit examples, we consider two matrices, $\Xi$ and $\Psi$, defined as follows.  Let $\Psi\in\C^{M\times 
N}$ be the sensing matrix that equals $F_N$ restricted to the first $M$ rows, and let $\Xi\in\C^{M\times N}$ be the sensing matrix that equals $F_M$ inflated by 
$(2d+1)$; in other words, every column of $F_M$ is copied $2d$ times to form a consecutive block of $(2d+1)$ columns in $\Xi$. In \figref{fig:correlation_measures} direct column correlations 
$\mu(1, 1+f)$ of the first with the $(1+f)$th column are shown for $\Psi$ (dashed, black) and $\Xi$ (dash-dotted, orange). Note, only half of the range is 
shown, as the other half is mirrored. Since both $\Psi$ and $\Xi$ are directly derived from the Fourier matrix, they inherit the invariance property of the column 
correlations with shifting reference index $j$,
\begin{equation}
 \forall j\in\bracks{N}:\forall f\in\N_0\;\text{such that}\;\mu(1, 1+f) = \mu(j, j+f)\;. \label{eq:fourier_matrix_invariance}
\end{equation}
Thus the shown correlation pattern is exemplary for any column index. For $\Xi$ this is only true for every $(2d+1)$th column.

If $\Xi$ is constructed with a fixed $(2d+1)=3$ inflation, we observe $3$ columns with $\mu(1, 1+f)=1$. Those are the first column and its copies. For all columns 
further away $\mu(1, 1+f)=0$ due to the orthogonality of the columns in $F_M$. Since it is at least for every $(2d+1)=n$th column $\mu(n, n+d)=\mu_d(\Xi)$ for this 
matrix construction, we have maximal \hbox{$d>d'$-co}\-herence and minimal $d<\ti{d}$-incoherence.  
 To incorporate noise robustness, large 
\hbox{$d'$-co}\-herences that are still unequal to $1$ are preferable. The greater the deviation from $1$, the larger the noise tolerance. This statement is however 
limited. Allowing for too much noise compensation would allow a \hbox{$d$-tol}\-erant reconstruction to completely fail. Experimentally it was found that for OMP a 
$d$-coherence 
larger than $0.75$ is beneficial.

Due to row restrictions from $F_N$ as one continuous block in the case of $\Psi$, we see that large column correlations are possible that are not equal to $1$. Since 
the coherence $\mu(\Psi)$ (dotted, red) is large, from the perspective of conventional CS theory this matrix seems not to be suited for reconstruction. Matrices used in 
CS are usually required to have a coherence that is close to the Welch bound, $\muW(\Psi)$ (solid, blue). We can see however that although correlations of neighboring 
columns in $\Psi$ are large, the level of correlation rapidly drops with increasing distance between the regarded columns. This means the matrix is only partially 
coherent and well suited for \hbox{$d$-tol}\-erant recovery. For $d>8$ (for $M=24,N=128$) it is even $\mu_d(\Psi)<\muW(\Psi)$ motivating the hope that if existing 
incoherent theory could be adapted to $d$-incoherent theory in a similar way, then it would be possible to get an even better performance in \hbox{$d$-tol}\-erant 
recovery than 
the 
incoherent theory would allow for a perfectly incoherent sensing matrix. This behavior is well captured by the \hbox{$d$-co}\-herence (solid, gray). So more 
specifically $\Psi$ is \hbox{$\ti{d}$-in}\-coherent with $\ti{d}>8$ and could be considered \hbox{$d'$-co}\-herent for $d'\leq3$.

Qualitatively this means that in theory, noise robust reconstruction of $S$-sparse signals with small dynamic range, up to an SNR of $0.87$ (equal to $0.45$ of linear 
independence) with $d=3$ and $S=S_{\text{max}}=18$ from $M=24$ measurements would be possible.
In numerical experiments based on OMP and complex valued signals with arbitrary range, this translates into a $3$-tolerant recovery of $6$ more non-zeros on average by 
using the coherent matrix $\Psi$ instead of an incoherent matrix (random row restricted submatrix of $F_N$ of size $M\times N$). To recover at least the same 
amount of non-zeros with incoherent matrices as with partially coherent matrices and $d=3$, the tolerance would have to be increased to $d\geq 8$. This is true 
for any SNR in the range of $[0,\infty]$.

\subsection{Additional definitions}\label{sec:sub_additional_definitions}
In this section we introduce a collection of other important concepts that help characterizing the \hbox{$d$-tol}\-erant recovery setup. We begin with generalizing the 
concept 
of the aforementioned column correlation invariance, \eqref{eq:fourier_matrix_invariance}, of Fourier submatrices obtained by row selection. The distribution of highly 
correlated columns within $\Phi$ can be characterized in terms of matrix coherence functions.
\begin{df}
  \label{def:matrix_coherence_functions}
  The set of matrix coherence functions $\setbr{\mu^{(j)}}_{j\in\bracks{N}}$ of a matrix $\Phi\in\C^{M\times N}$ is defined through
  \begin{align}
    \mu^{(j)} := \paren{\mu\paren{\phi_j,\phi_1},\dots , \mu\paren{\phi_j,\phi_N}} \,. \label{eq:def_matrix_coherence_functions}
  \end{align}
\end{df}

With the help of the matrix coherence functions, two fundamentally different types of partially coherent matrices can be distinguished.

\begin{df}
  \label{def:classes_of_coherence_neighborhoods}
  A set of matrix coherence functions is called dynamic, if the correlation of any column with the reference column depends only on the difference of the column 
  indices. Otherwise a set of matrix coherence functions is called static.
\end{df}
The choice of the terminology static and dynamic is motivated by the simple cases (i) when all neighboring columns are highly correlated the coherence functions can 
be viewed via the gram matrix $\Phi^*\Phi$ and appear as a sliding gradient (dynamic), e.g. $\Phi\equiv\Psi$, whereas (ii) when the matrix contains blocks of correlated 
columns and columns in different blocks are uncorrelated, the gram matrix consists of a rigid series of blocks (static), e.g. $\Phi\equiv\Xi$.

A similar \hbox{$d$-tol}\-erant extension as was made to the coherence can be made to the cumulative coherence (also known as \lN{1}-coherence or the Babel function). 
It will be used in the proof of Theorem~\ref{thm:dtompv2_dtrc_guarantee}. The cumulative \hbox{$d$-co}\-herence is one way to quantify the correlations of any given 
element with a consecutive, disjoint block of length at most $2d+1$.

\begin{df}
  \label{def:mu_c_d}
  For $\Phi\in\C^{M\times N}$, we define its cumulative \hbox{$d$-co}\-herence $\mu^C_d(\Phi,k)$ with test-set cardinality $k$ as:
    \begin{align}
      \mu^C_d(\Phi, k) &:= \max_{\substack{\Gamma\subset\bracks{N}\\\abs{\Gamma}=k}}\max_{i \notin \clos{d}{\Gamma}} 
      \sum_{j\in\Gamma} \frac{\abs{\scalarP{\phi_i}{\phi_j}_2}}{\norm{\phi_i}_2\norm{\phi_j}_2} \,. \label{eq:def_cummulative_d_coherence}
    \end{align}
  We write $\mu^C(\Phi, k) := \mu^C_0(\Phi, k)$ for the standard cumulative coherence.
\end{df}

It is easy to see that the cumulative $d$-coherence satisfies the following properties:
\begin{itemize}
  \item $\mu^C_d(\Phi,k)$ is monotonically decreasing as $d$ increases. Indeed, we have for any $f<d\in\N_0$:
  \begin{equation}
    \mu^C_d(\Phi,k) \leq \mu^C_f(\Phi,k) \leq \mu^C(\Phi,k)\,.     \label{eq:mu_c_d_monotony_wrt_d}
  \end{equation}
  \item $\mu^C_d(\Phi,k)$ is monotonically increasing as $k$ increases:
    \begin{equation}
      \forall k<l\in\N_0: \mu^C_d(\Phi,k) \leq \mu^C_d(\Phi,l)\,.     \label{eq:mu_c_d_monotony_wrt_k}
    \end{equation}
  \item The lower bound given in Theorem 5.8 of \cite{Foucart2013Book} applies by replacing $N$ by $\hat{N}:=\max\setbr{M, \ceil{\frac{N}{d}}}$. That is:
    \begin{equation}
      k \leq \sqrt{\hat{N}-1} \lra \mu^C_d(\Phi, k)\geq k\sqrt{\frac{\hat{N}-M}{M(\hat{N}-1)}}\,.
    \end{equation}
\end{itemize}

\section{Numerical simulation results}\label{sec:numerical_simulation_results}
In this section we demonstrate the advantage of coherence in \hbox{$d$-tol}\-erant recovery using numerical simulation results. The main part of the results is 
based on the \hbox{$d$-tol}\-erant recovery measure, \eqref{eq:def_dtol_recovery}. Results shown in \figref{fig:comparebasic} and \figref{fig:compare_convex} were 
 produced using MATLAB 2017a \cite{matlab} with the median of 100 iterations per data point. All other results shown below are obtained via standard OMP with MATLAB 
2015b \cite{matlab} 
using the RPECS Matlab toolbox (version 1.1) \cite{RPECS} and with the median of $500$ iterations per data point. For each iteration just the signal and the 
noise were re-initialized. The matrices were newly initialized for each set of parameters only. All generated signals were complex valued. Their support was uniformly 
random distributed. The amplitudes of the real and imaginary parts were selected i.i.d. for every non-zero, uniformly at random on $[-50,50]$. The noise entries were 
i.i.d. standard normally distributed and then rescaled to fit the desired signal-to-noise ratio (SNR). Note that we do \textit{not} force the signal to have spread 
support unless explicitly stated.

We consider several types of sensing matrices, given as:

\begin{description}
\item \Fcb: The first $M$ consecutive rows of $F_N$. (Called $\Psi$ in Section~\ref{sec:sub_d_coherence}.)
\item \Fc: Any $M$ consecutive rows of $F_N$. The shift of the block was uniformly random distributed and selected from $\bracks{1,N-M}$.
\item \Fr: $M$ rows of $F_N$ were selected uniformly at random.
\item \Fx: Uses $5\floor{\log{N}}$ blocks of consecutive columns from \Fc, constructed from $F_{3N}$.
\item \Rg: Gaussian random matrix.
\end{description}
All matrices were \lN{2}-column normalized.

The incoherent matrices are \Fr~and \Rg, and the partially coherent matrices are \Fcb, \Fc~and \Fx. %From setting the number of highly correlated columns to $2d+1$ and 
%$N=1024,M=32$ it follows for $\mu_d(\Phi) < 0.9$: $d=10$, $d=10$, and $d=26$ and for $\mu_d(\Phi) < \muW(\Phi):$ $d=53$, $d=53$, and $d\approx91$ for these matrices 
%respectively. 
\Fx~is an approximation to a matrix with static matrix coherence function. The coherence across the column blocks will be low and thus the matrix will 
appear to have almost rigid blocks of high coherence.

\begin{figure*}
\centering
  \subfloat[$N=1024$\label{fig:contour_plot_1024}]{
  \includegraphics[width=0.49\columnwidth]{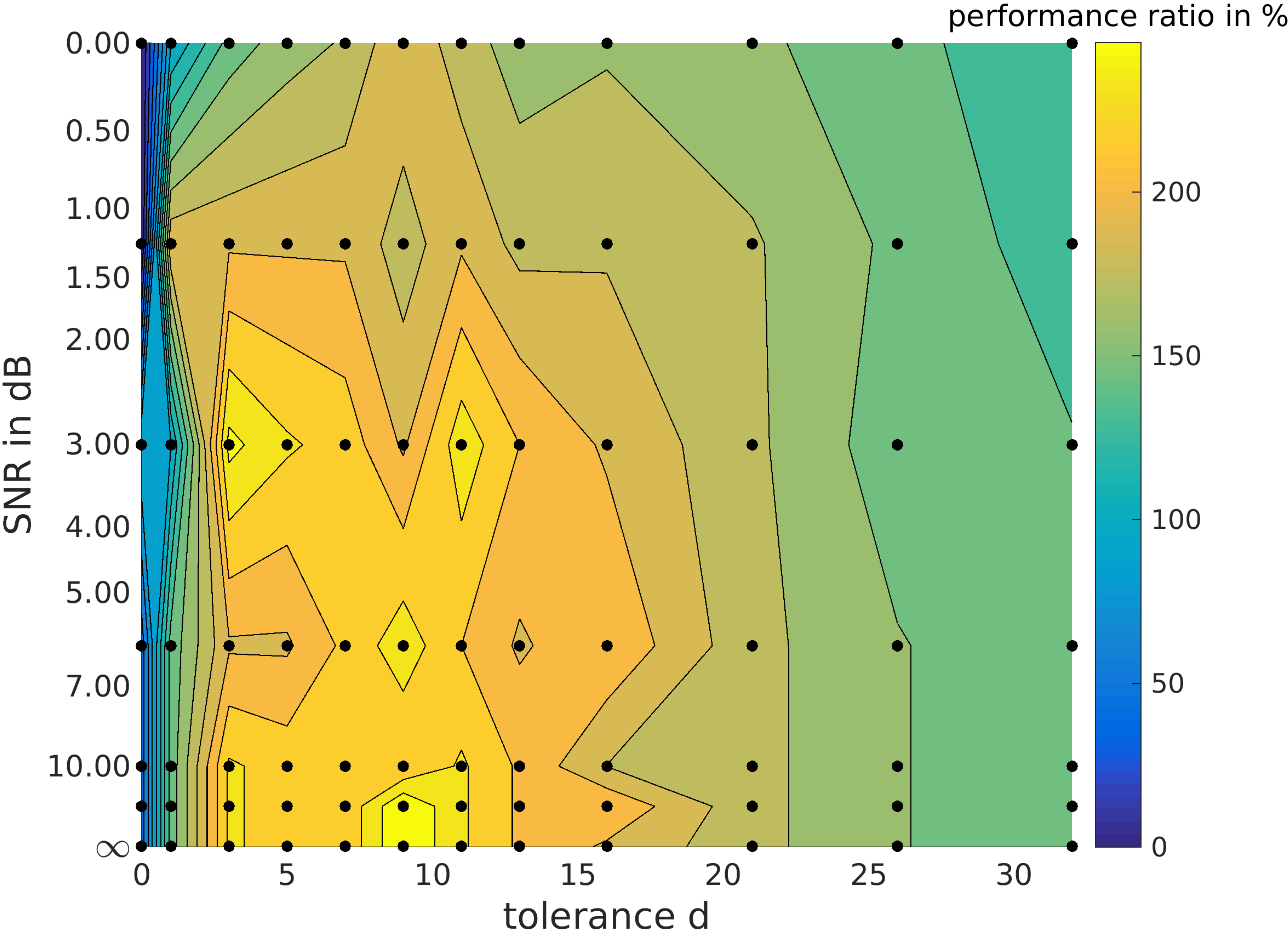}
  }
  \subfloat[$N=2048$\label{fig:contour_plot_2048}]{
  \includegraphics[width=0.49\columnwidth]{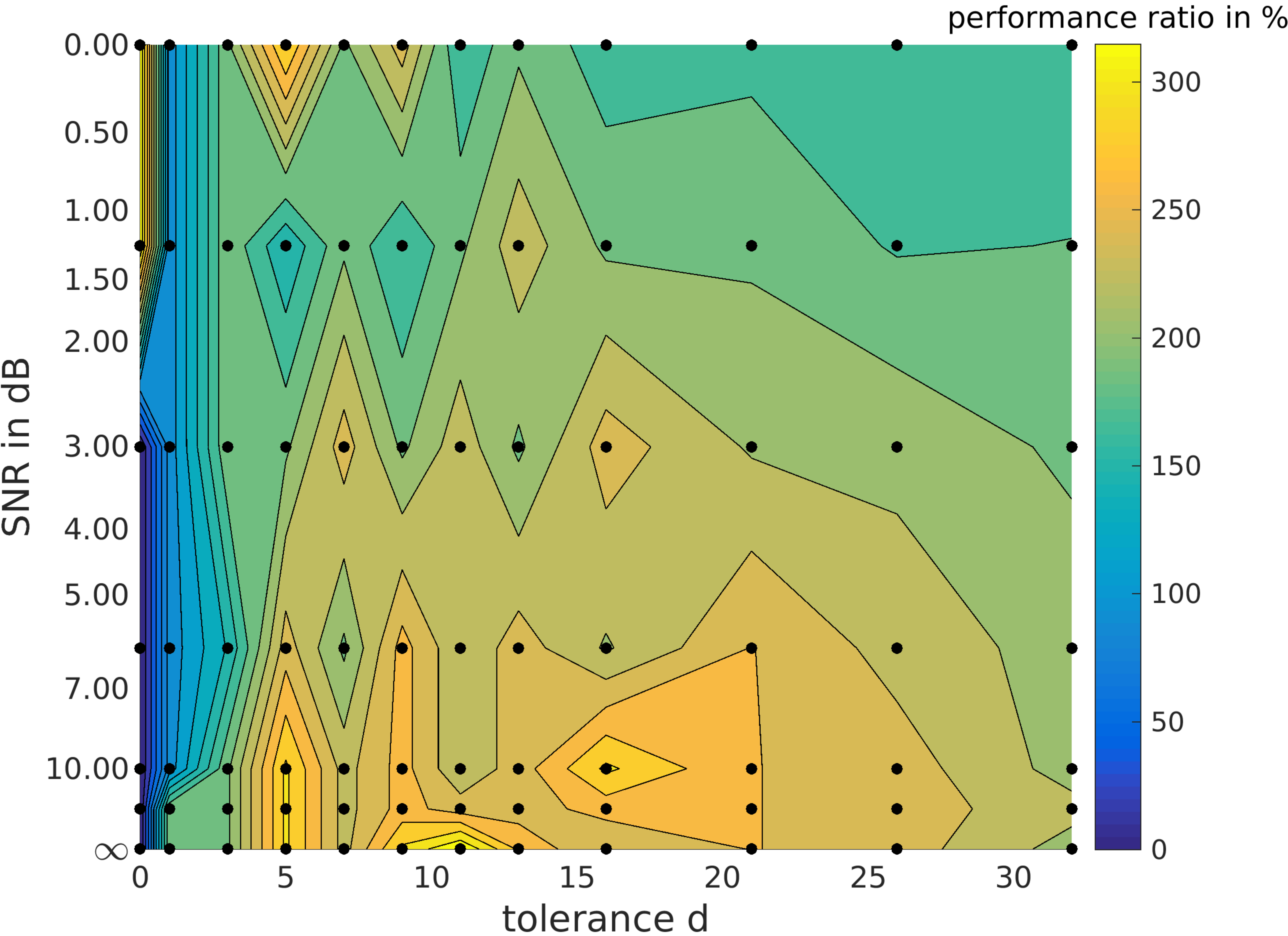}
  }
  \caption{Consider the percentage of non-zeros recovered with respect to the \hbox{$d$-tol}\-erant recovery measure. Let $X$ be the percentage recovered with a 
  partially coherent sensing matrix (\Fc) and $Y$ the percentage recovered with an incoherent sensing matrix (\Fr). The plot shows the ratio $X/Y$ and as such the benefit of 
  coherent sensing for varying tolerances $d$ and amounts of noise. The number of measurements, $M=32$, the sparsity, \hbox{$S=16$}, and the signal dimension, $N$, 
  are fixed within each plot. 
  \label{fig:contour_plot}}
\end{figure*}

Shown in \figref{fig:contour_plot} are the ratios of the percentages of non-zeros that could be recovered, \hbox{$d$-tol}\-erant wise, with a partially coherent 
sensing matrix (\Fc) over an incoherent sensing matrix (\Fr). The color bar represents the ratio in recovery percentages; thus, when the color is greater than 
$100\%$ we see improvements with our method. The presented situations are heavily undersampled with the number of measurements $M=32$ fixed and $N=1024$ or $N=2048$. 
In both plots, we see an optimal value for the tolerance $d$. More importantly, we observe improvements from coherence (i.e. when the ratio percentage is above 
100) for a broad range of values of $d$, especially in the mild SNR regime. Only if $d=0$ incoherent matrices perform like partially coherent sensing matrices. 
Unsurprisingly, this means that for exact support recovery the incoherent 
sensing is similar or better (for small undersampling factors $N/M\leq4$, see discussion of \figref{fig:m_trend} below) in determining the position of every 
non-zero.

\begin{figure*}[ht!]
  \centering
  \subfloat[$N/M=1024/32$\label{fig:d_trend_1024}]{
  \includegraphics[width=0.49\columnwidth]{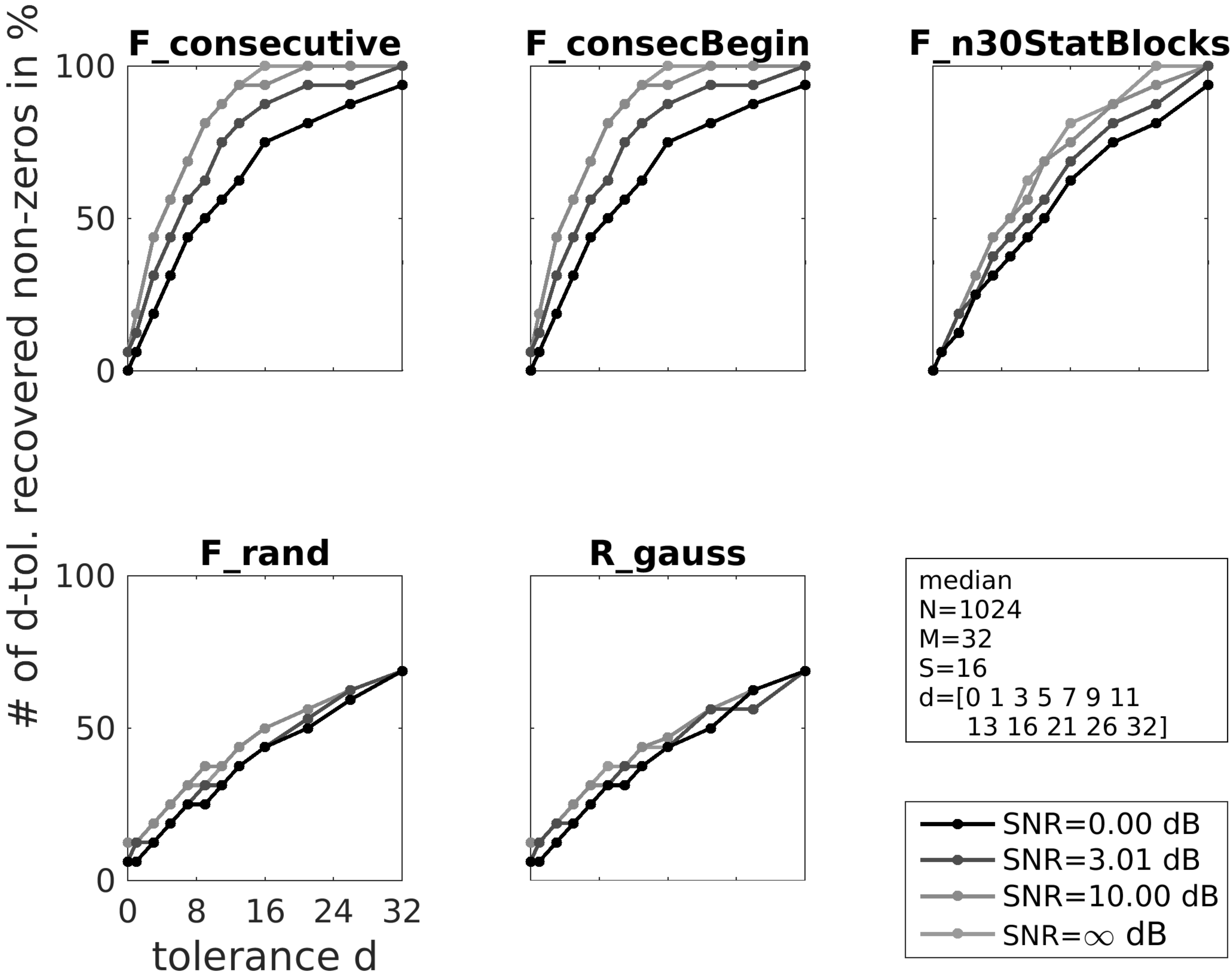}
  }
  \subfloat[$N/M=2048/64$\label{fig:d_trend_2048}]{
  \includegraphics[width=0.49\columnwidth]{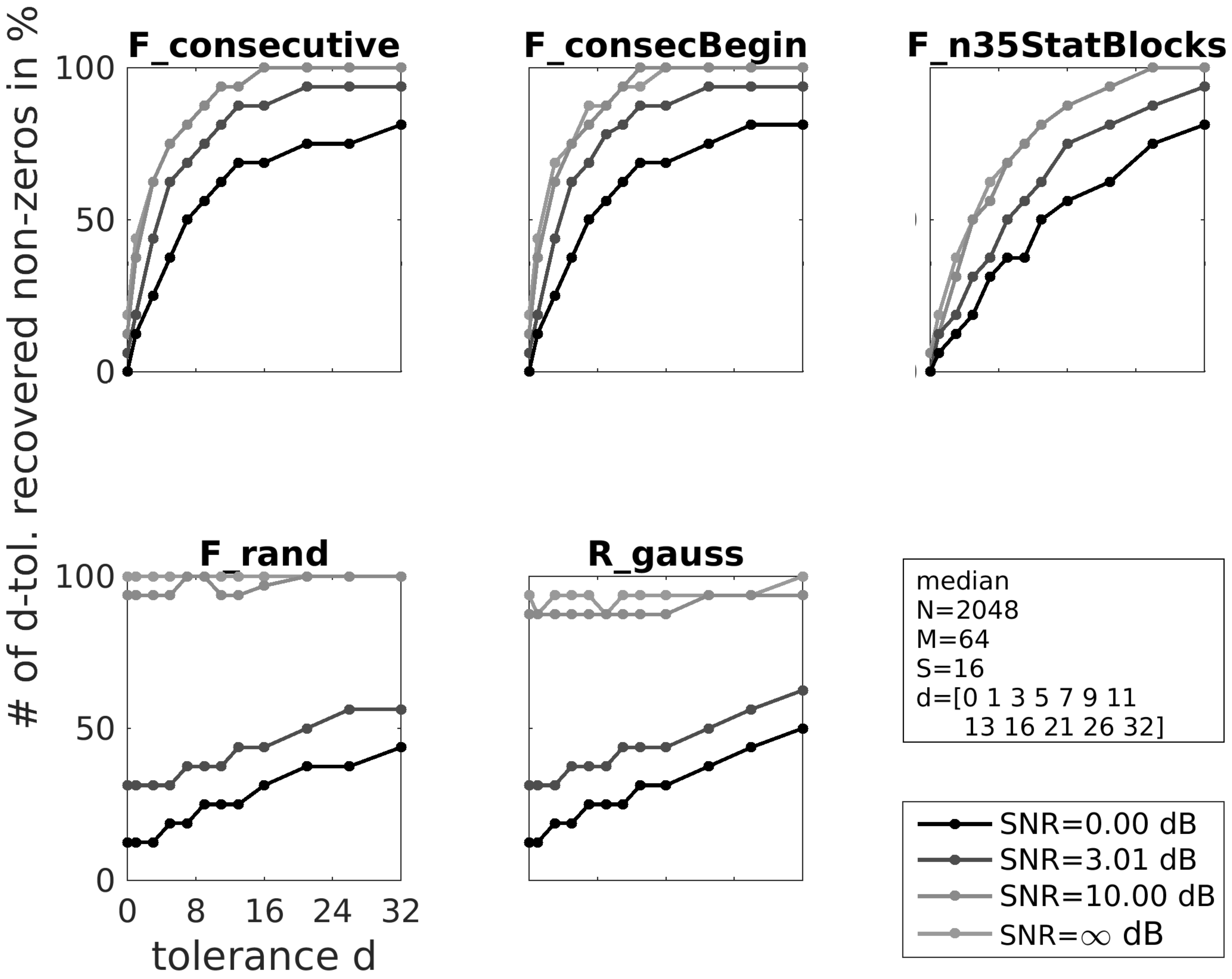}
  }
  \caption{
The advantage of coherent (top) over incoherent sensing matrices (bottom) is
illustrated in terms of percentages of \hbox{$d$-tol}\-erant recovered non-zeros via OMP
for an increasing tolerance $d$ and several noise levels. The undersampling
factor is fixed at $N/M = 32$ and the sparsity is $S = 16$.
  \label{fig:d_trends}  
  }
\end{figure*}

\hbox{Figure~\ref{fig:d_trends}} shows the average number of \hbox{$d$-tol}\-erantly recovered non-zeros as a function of $d$, for various types of sensing matrices, 
various noise 
levels, and only few measurements ($N/M=32$). We consider $N/M=1024/32$ in \figref{fig:d_trend_1024} first. The partially coherent sensing matrices \Fc~and 
\Fcb~(with dynamic matrix coherence functions) perform especially well. Most importantly, already with small values of $d$ ($\approx 5$) much more non-zeros can be 
reconstructed. If little noise is present ($\snr\leq10$dB), for $d=9$ the number of reconstructed non-zeros is doubled for partially coherent versus incoherent sensing. 
Close to $100\%$ recovery is reached for $d>16$ in the low noise setting. 
\hbox{Figure~\ref{fig:d_trend_2048}} with $N/M=2048/64$ shows what happens if both the signal dimension and the number of measurements get scaled up. Due to the lower 
normalized sparsity $S/M=0.25$ incoherent sensing matrices are able to perform well for large SNR's ($\geq10$dB). As the amount of measurement noise 
increases the incoherent matrices are however drastically impacted ($31\%$ instead of $\approx100\%$ for $d=0$, $\snr=3.01$dB for both incoherent matrices). The impact 
of noise on the coherent sensing matrices is much less severe especially for $d\geq7$. 
This emphasizes that partially coherent sensing matrices can be employed very effectively at their optimal level of incoherence for challenging signal detection 
situations. The percentage of \hbox{$d$-tol}\-erantly recovered non-zeros is in general monotonic with increasing $d$ amongst all the sensing matrices. Therefore, 
selecting a 
large value for $d$ will not result in substantially worse recovery. This general rule coincides with intuition. For (almost) exact support reconstruction 
($d\in\setbr{0,1}$) using coherence is irrelevant or even bad in the heavily under-sampled setting throughout all SNR levels.

\begin{figure*}[ht!]
  \centering
  \includegraphics[width=0.49\columnwidth]{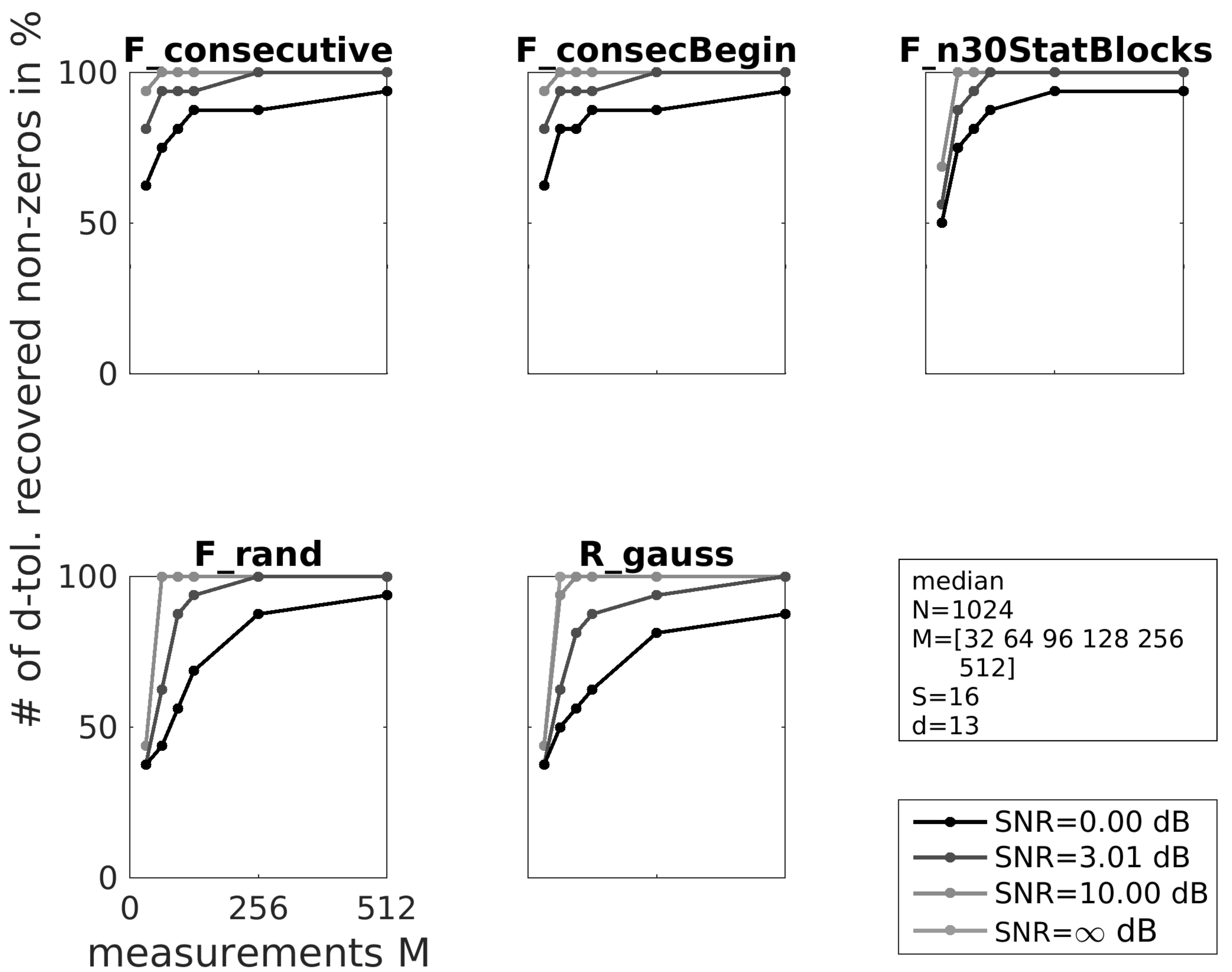}
  \caption{Even with an increasing number of measurements, the coherent (top) dominate the
incoherent sensing matrices with respect to the percentage of d-tolerant
recovered non-zeros via OMP for various noise levels. The signal
dimension, the sparsity, and the tolerance are fixed at $N = 1024$, $S = 16$, and $d = 13$.
  \label{fig:m_trend} }
\end{figure*}

\hbox{Figure~\ref{fig:m_trend}} depicts the trends in \hbox{$d$-tol}\-erant support recovery for an increasing number of measurements $M$ while $N$ is fixed, for 
various types of 
sensing matrices, and for various noise levels. Again \Fc~and \Fcb~perform especially well. Coherent sensing matrices make much better use of additional, possibly very 
distorted, measurements, as soon as a certain tolerance (e.g. $d=13$) in the signal support is allowed. For larger numbers of measurements $M\geq256$ incoherent sensing 
matrices perform similarly well, independent of the SNR. This again underlines that partially coherent sensing matrices are especially interesting for applications in 
which using few measurements is key.

\begin{figure*}[ht!]
\centering
  \includegraphics[width=0.49\columnwidth]{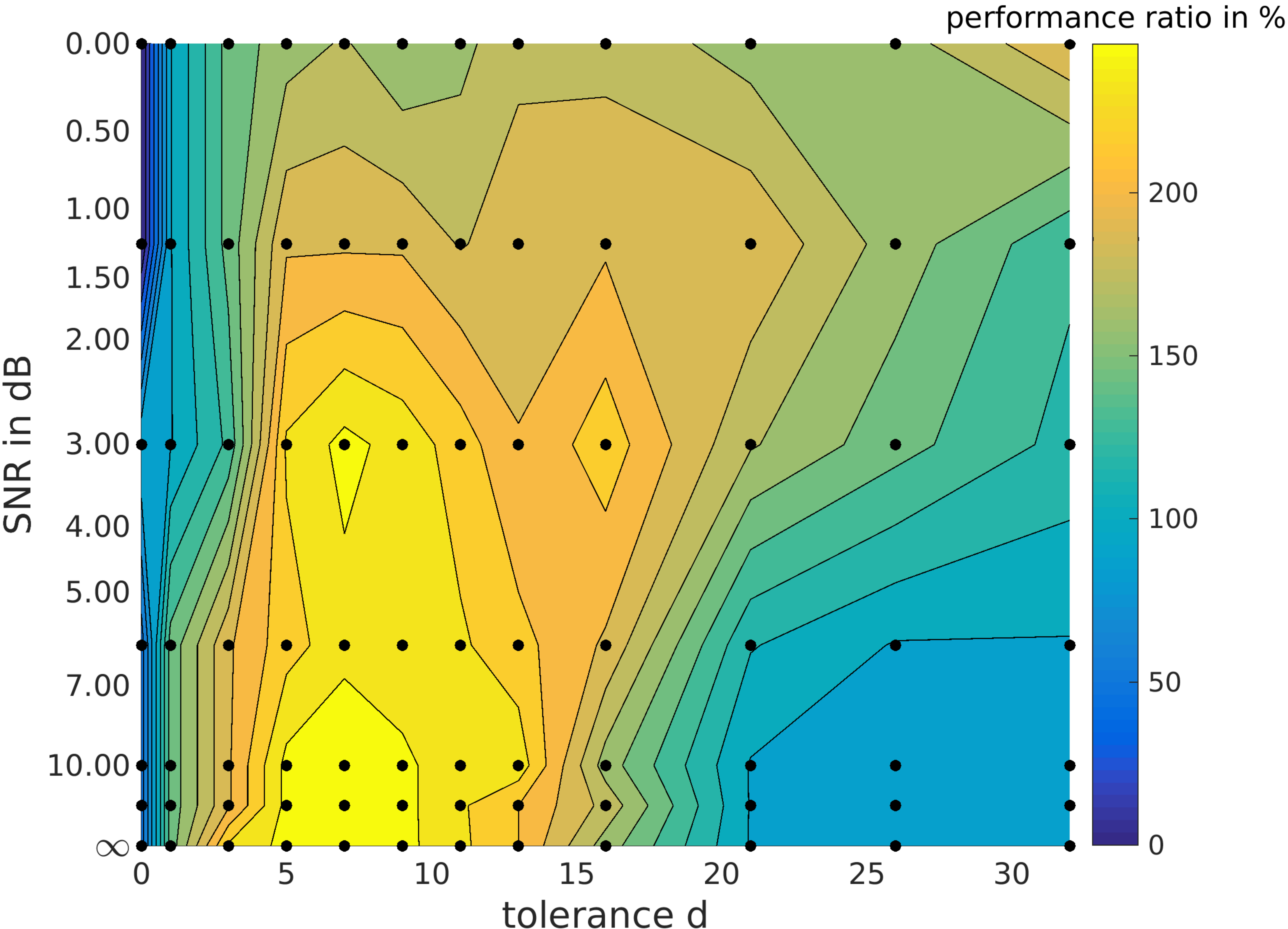}
  \caption{Extended regions with more then twice as many d-tolerant recovered non-zeros
using coherent versus incoherent sensing matrices are shown in the analogue of
\figref{fig:contour_plot_1024} for the special case of signals that have their non-zeros never closer
then $(4d + 1)$.
  \label{fig:contour_plot_1024_spread}}
\end{figure*}
Since we will consider in Section~\ref{sec:theorems} only signals that have their non-zeros never closer then $(4d+1)$, we provide results for those signals in 
\figref{fig:contour_plot_1024_spread}, in analogy to \figref{fig:contour_plot}. 

Next, we compare the coherent sensing paradigm to the simple subsampling strategies described in Section \ref{sec:sub_d_tolerant_recovery}.  Unsurprisingly, the 
second two naive approaches described there yield very poor results and are not even competitive. \hbox{Figure \ref{fig:comparebasic}} displays the results for the 
``Subsampling on coarse grid'' approach; using the standard OMP reconstruction method. The notation $F_R$ indicates the rows were subsampled at random, 
and $F_{cB}$ indicates they were selected to be the first $M$ consecutive rows. Since $d$ is typically much smaller than $N$, both types of sampling approaches are in 
some sense coherent, so it is not surprising that both are somewhat comparable. Our design, however, maintains the signal on a finer grid, which induces slightly more 
coherence, which is evident in the improved reconstruction.

\begin{figure*}
  \centering
  \subfloat[$N/M=1024/64$\label{fig:naive_1024}]{
  \includegraphics[width=0.49\columnwidth%, height=2.9in
  ]{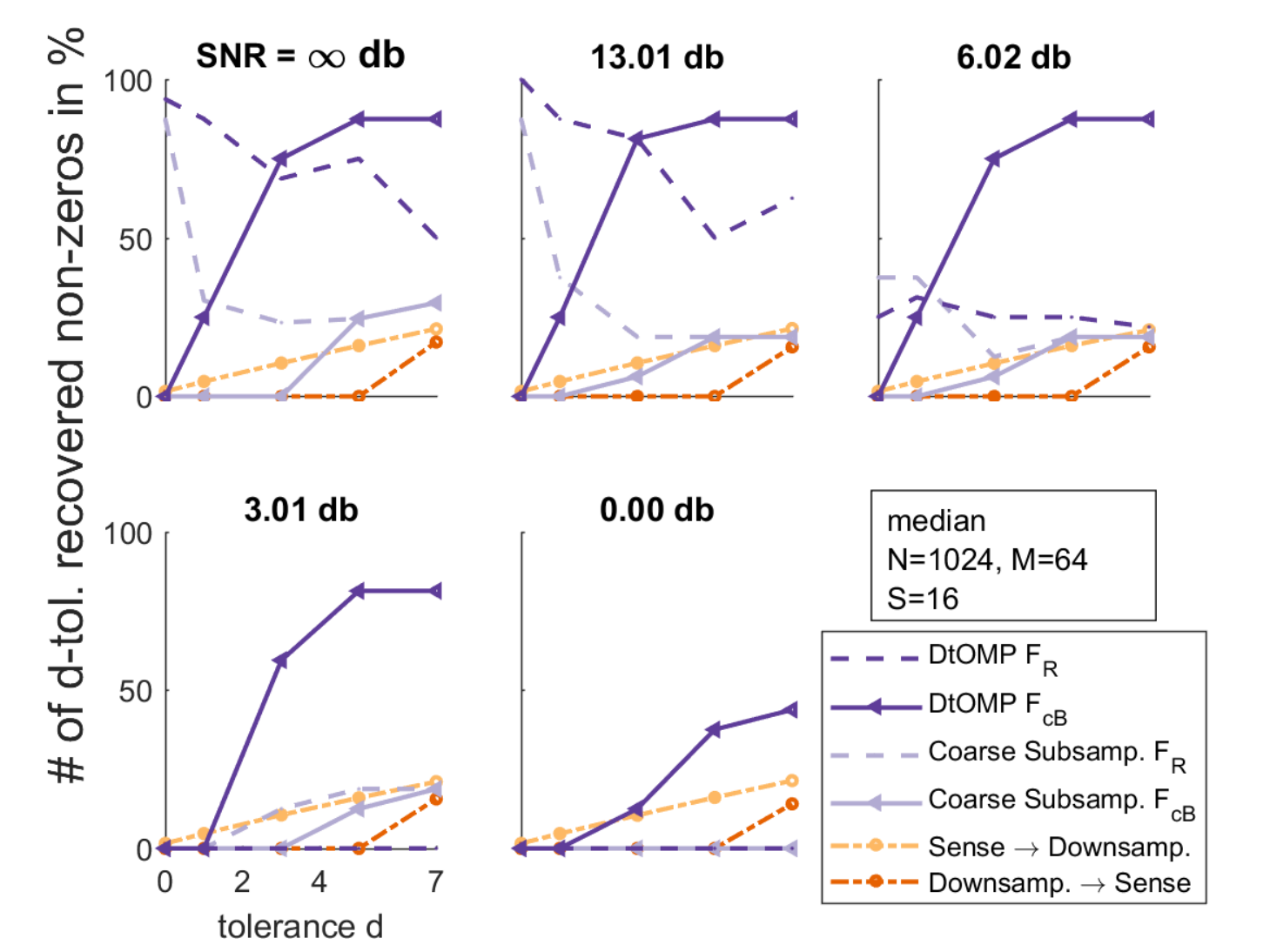}
  }
  \subfloat[$N/M=2048/64$\label{fig:naive_2048}]{
  \includegraphics[width=0.49\columnwidth%, height=2.9in
  ]{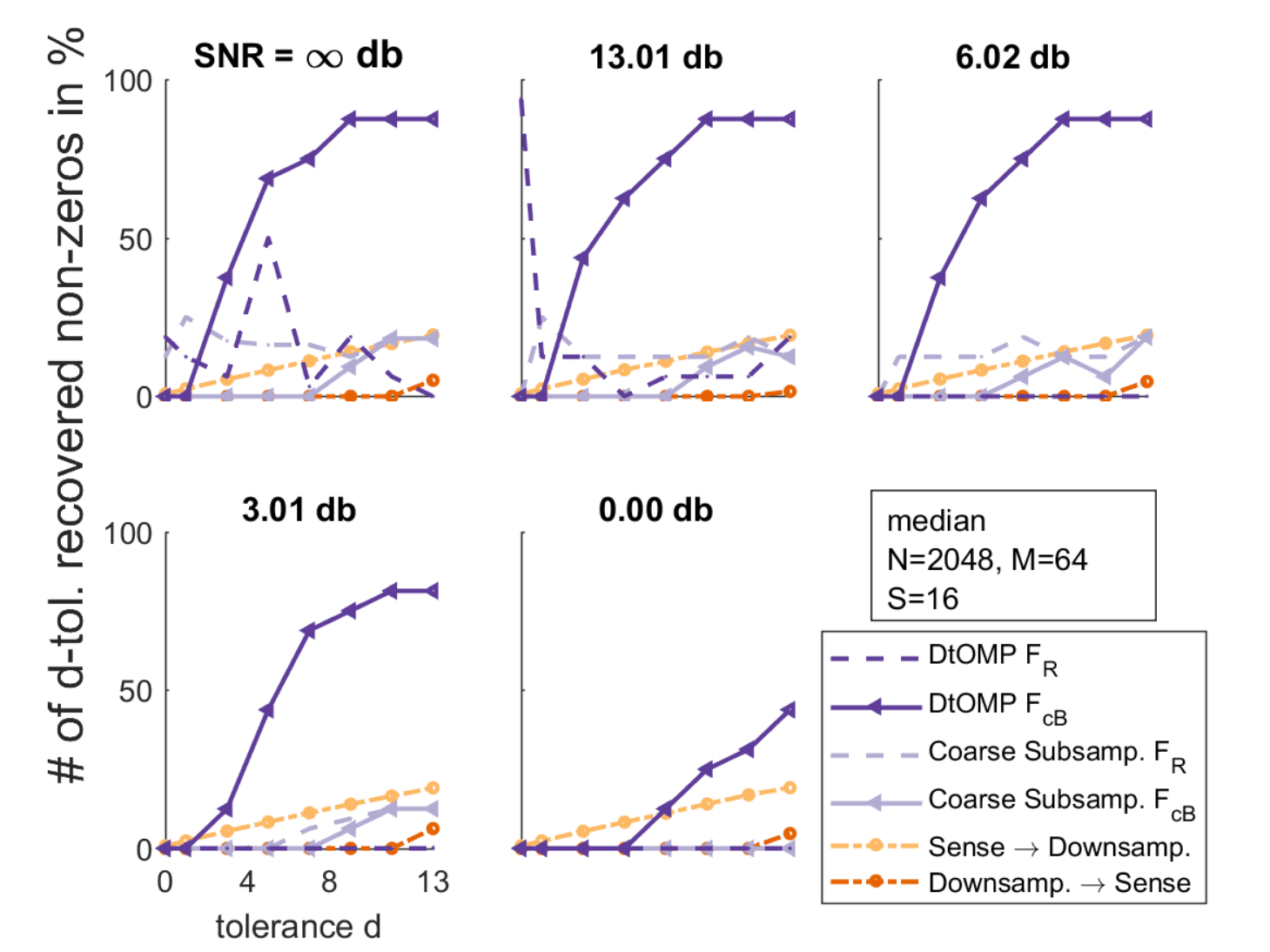}
  }
  \caption{A comparison of the proposed DtOMP, Algorithm~\ref{alg:modified_omp}, with the other methods outlined in Section
\ref{sec:sub_d_tolerant_recovery} reveals that out of these options using the coherent sensing matrix,
$F\_consecBegin$ ($F\_cB$) with DtOMP, recovers the largest percentage of
non-zeros with a certain d-tolerance as soon as considerable noise is present
(top) and/or the undersampling ratio is increased drastically (bottom).
$F\_rand$ ($F\_R$) was the chosen representative for incoherent matrices.
 \label{fig:comparebasic}  
  }
\end{figure*}

In this paper we have focused on greedy methods for simplicity of the analysis, but for completeness we include some results using convex methods for reconstruction, 
\figref{fig:compare_convex}. In particular, we compare the results using the proposed DtOMP, Algorithm~\ref{alg:modified_omp}, against Basis Pursuit Denoising (using 
SPGL1). We see similar trends and behavior in terms of the tolerant \lN{2}-error measure, \figref{fig:compare_convex_intensity}, but from the number of tolerant 
recovered non-zeros its clear that actually often the greedy approach outperforms the convex method, \figref{fig:compare_convex_support}. The reason for this behavior 
are multiple false positives in the case of the convex method. Note especially that the typical partially coherent sensing matrix $\text{F}_{\text{cB}}$ has an 
advantage over the incoherent matrix $\text{F}_{\text{R}}$ when noise is present and tolerant recovery is the objective. 
However, we emphasize once again that the OMP-based reconstruction method is likely still not optimal, and that further study should be done to analyze reconstruction 
performance under this new paradigm of beneficial coherent sensing.

\begin{figure*}
  \centering
  \subfloat[\label{fig:compare_convex_intensity}]{
  \includegraphics[width=0.49\columnwidth%, height=2.9in
  ]{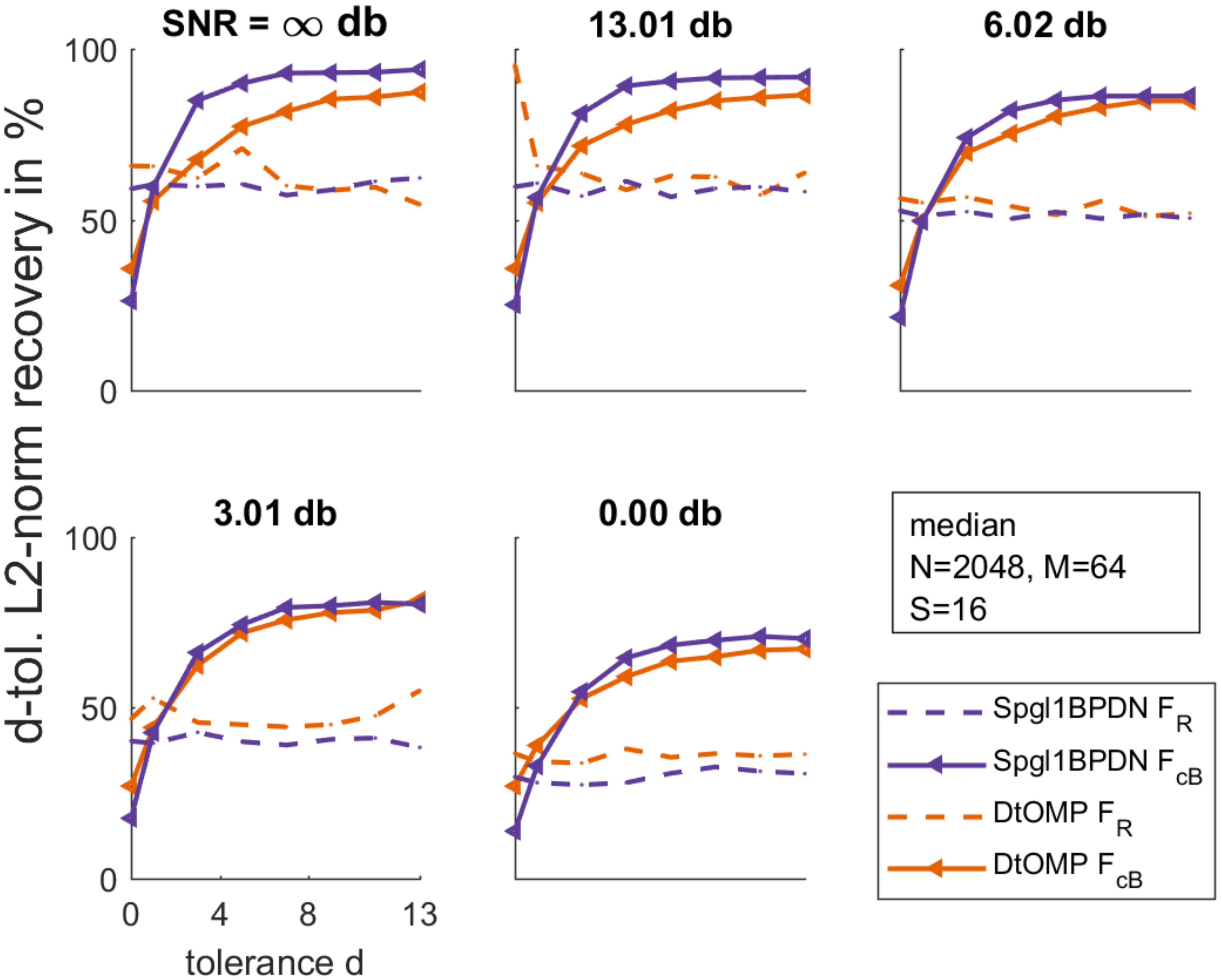}
  }
  \subfloat[\label{fig:compare_convex_support}]{
  \includegraphics[width=0.49\columnwidth%, height=2.9in
  ]{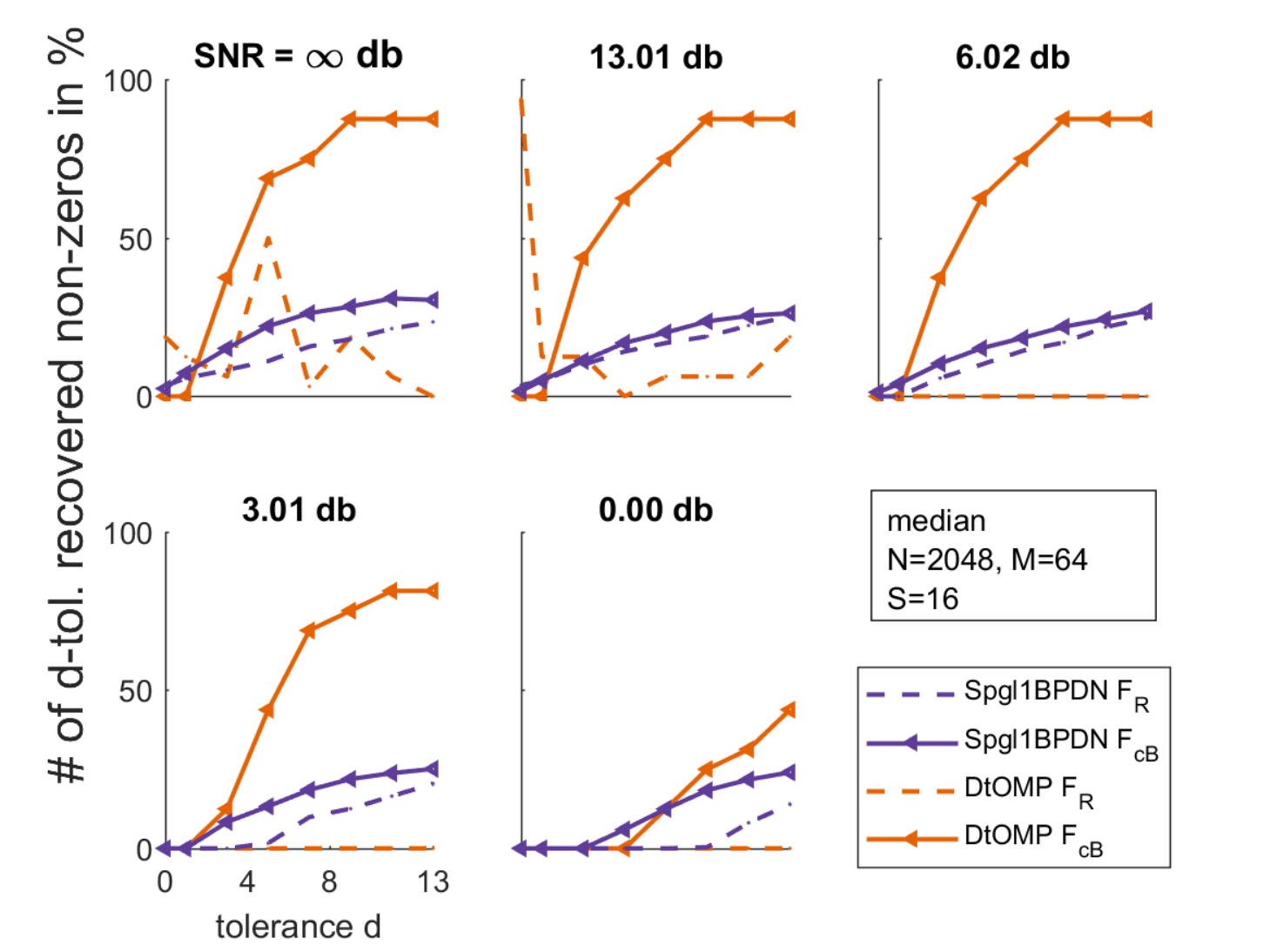}
  }
  \caption{Convex methods such as SPGL1 BPDN are comparable to DtOMP in terms of tolerantly recovered intensity but not in 
  percentages of \hbox{$d$-tol}\-erant recovered non-zeros for large undersampling ratios
  of $N/M = 32$, irrespective of the measurement noise and tolerance chosen.\label{fig:compare_convex} }
\end{figure*}

We close with a remark on the \hbox{$d$-tol}\-erant \lN{2}-norm error based recovery measure $\rho_2$, introduced below. Finding such a measure is not trivial but may 
be desired for classification of the magnitude differences of reconstruction and true signal. We choose a measure that requires knowledge about the true signal and is 
evaluated in two steps: First, we create new proxy signals $x_p, \ti{x}_p$ via: 
\begin{subequations}
  \begin{align}
     (x_p)_i &= \begin{cases}
          \sum_{j\in\clos{d}{i}} \abs{x_j}, &\text{if } i\in\supp{x}\\
          0, & \text{otherwise}
        \end{cases}\label{eq:d_tolerant_l2_recovery_metric_proxy_signal}\\
     (\ti{x}_p)_i &= \begin{cases}
          \sum_{j\in\clos{d}{i}} \abs{\ti{x}_j}, &\text{if } i\in\supp{x}\\
          0, & \text{otherwise}\;.
        \end{cases}\label{eq:d_tolerant_l2_recovery_metric_proxy_recovery}
  \end{align}
\end{subequations}
For example, if $x = (1,0,1,0,0,1,1,0,0)$ and $d=2$, we have $x_p = (2,0,2,0,0,2,2,0,0)$. Note that we sum over the same set of indices in both cases, which causes both proxy signals to share the same support. In a second step we compute
\begin{equation}
  \rho_2(\ti{x}_p, x_p) := 1-\frac{\norm{\ti{x}_p - x_p}_2}{\norm{\ti{x}_p}_2 \norm{x_p}_2}\;. \label{eq:d_tolerant_l2_recovery_metric}
\end{equation}

\begin{figure*}
\centering
  \subfloat[$M=32$\label{fig:d_trend_32_dl2}]{
  \includegraphics[width=0.49\columnwidth]{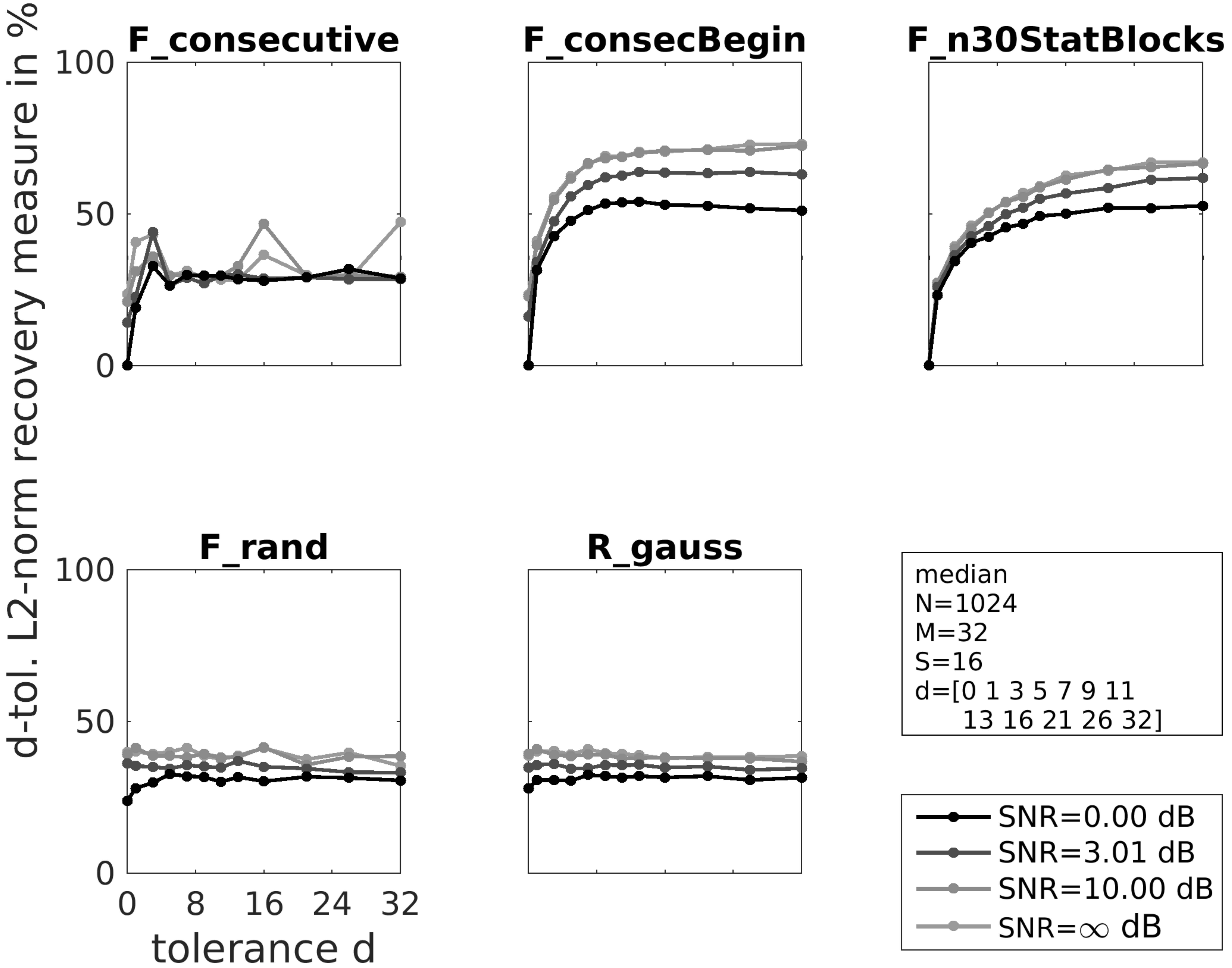}
  }
  \subfloat[$M=64$\label{fig:d_trend_64_dl2}]{
  \includegraphics[width=0.49\columnwidth]{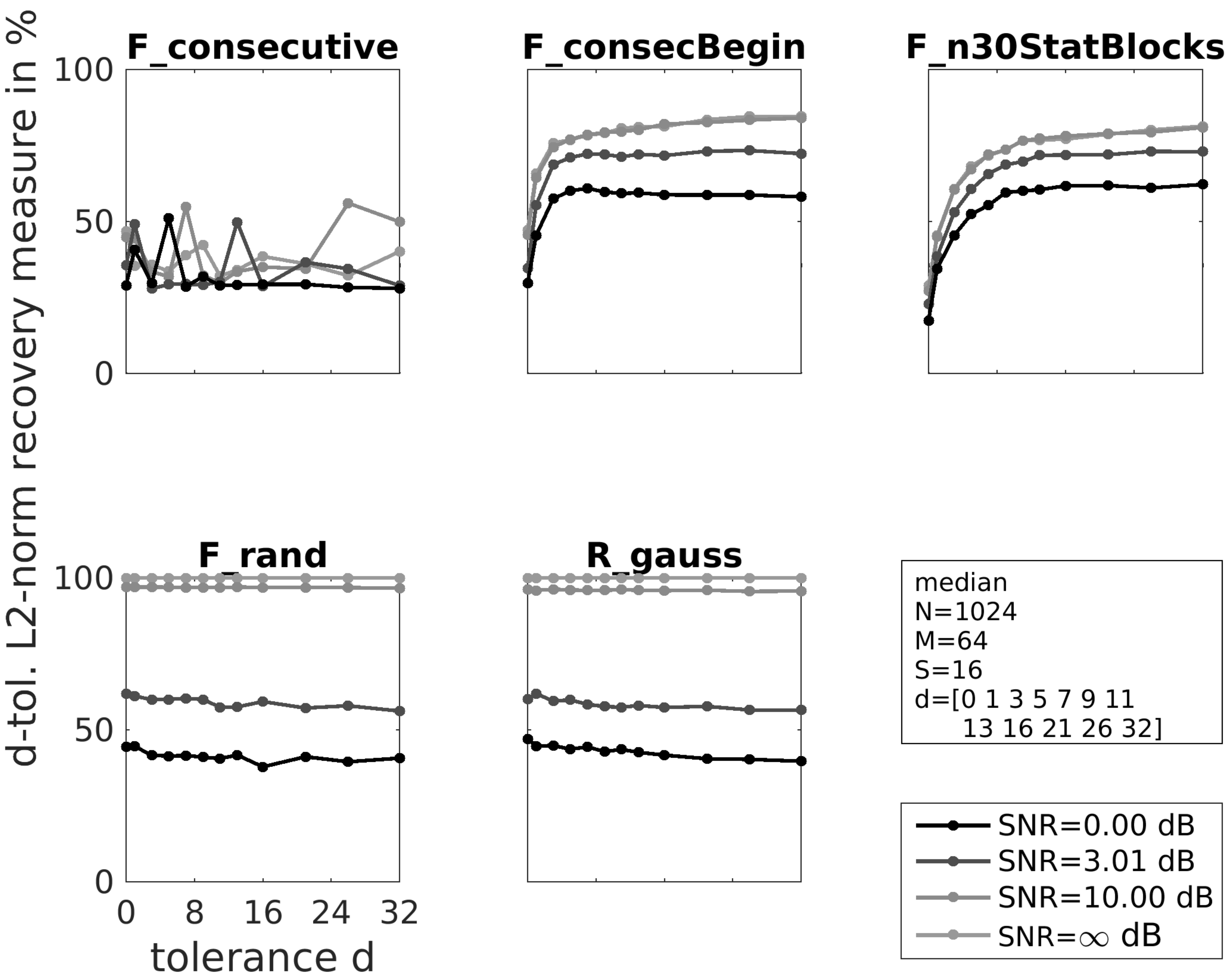}
  }
  \caption{Incoherent (bottom) can compete with coherent sensing matrices (top) in terms
of the \hbox{$d$-tol}\-erant recovered $\ell_2$-norm only if the undersampling ratio
is comparatively low, $N/M=16$, and the measurement noise is low as well. For
harsher sensing conditions $F\_consecBegin$ usually gives the best performance as
well.
    \label{fig:d_trends_dl2}}
\end{figure*}

For that recovery measure we find incoherent sensing matrices are favorable for any SNR and $M\in[64,256)$, i.e. a normalized sparsity smaller than $S/M\leq0.25$ and an 
undersampling factor larger than $N/M\geq4$. This is depicted exemplary in \figref{fig:d_trends_dl2} for $S/M=0.5$ in \figref{fig:d_trend_32_dl2} and $S/M=0.25$ in 
\figref{fig:d_trend_64_dl2}. For \Fcb~and \Fx~we observe the same low impact of measurement noise and about $50\%$ recovery as soon as the tolerance is set sufficiently 
large, $d\geq7$. We notice \Fcb~is slightly better than \Fx~but \Fc~produces a much weaker and highly inconclusive result. This is due to the different construction of 
\Fcb~and \Fc. Both share exactly the same coherence pattern (absolute value) but in general only the former has a smoothly varying phase difference among the real and 
imaginary parts of the columns. The latter experiences rapid phase shifts in real and imaginary part from column to column. Thus an approximate \hbox{$d$-tol}\-erant 
reconstruction can have a quite different magnitude even though reconstructed columns are largely correlated to the true support. Nevertheless, the findings for 
\Fcb~and 
\Fx~stress that it is not only possible to better reconstruct the approximate support but also the approximate magnitude by using coherent sensing matrices in 
difficult sensing scenarios. With a smaller undersampling factor, e.g. $N/M\geq4$, the results (not depicted here) of in-/coherent sensing matrices largely coincide 
again. This nicely complements the observations made above: For very few (possibly noisy) measurements, partially coherent sensing matrices give a better reconstruction 
both in support and magnitude. But in the typical CS setting with exact recovery with respect to the \lN{2}-norm and a moderately low number of measurements, incoherent 
sensing matrices successfully prevail.

\section{Analytical Justification}\label{sec:theorems}
In this section we provide initial guarantees for the \hbox{$d$-tol}\-erant recovery of $S$-sparse signals without measurement noise through an OMP-like algorithm using 
partially coherent sensing matrices. We will utilize the notion of spread support.
\begin{df}[$d$-spread set]
  A set $B$ is $d$-spread, if
  \begin{equation}
    \forall i\neq j\in B:\enskip \abs{i-j}>d\;.
  \end{equation}
\end{df}
A signal $x$ is said to have a $d$-spread support, if $\supp{x}$ is a $d$-spread set. For the purpose of this work, sufficiently spread means a signal has a 
$(4d+1)$-spread support. 
This allows us to ignore recombinations of multiple non-zeros to a single representative during reconstruction and enables us to prove results following closely the 
initial contributions made for robust recovery via standard OMP. The line of theorems we follow is based on the exact recovery condition (ERC). Within the spread signal 
support setting only minor adjustments to the theorems are necessary in the noiseless scenario to ensure validity for \hbox{$d$-tol}\-erant recovery with partially 
coherent sensing matrices. 
Further, the OMP-like algorithm has been empirically found to perform similar to OMP in this setting with and without measurement noise. The method below is an adaptation of OMP, which is similar to Band-Excluded OMP in \cite{Fannjiang2012}, in the context of ``coherence bands''.

\subsection{Algorithm}\label{sec:sub_algorithm}
To account for the ban of recombinations in the OMP algorithm we forbid new candidates for the reconstructed support to be selected from the $2d$-closure of the already 
reconstructed support, as shown in Algorithm~1. This modification ensures that every high coherence neighborhood is met exactly once and since we will assume a $(4d+1)$ 
spread for our signals in the statements of the next section, we can guarantee not to miss any non-zero by this exclusion.
\algrenewcommand\algorithmicrequire{\textbf{Input:}}
\algrenewcommand\algorithmicensure{\textbf{Output:}}
\begin{algorithm}
  \caption{Pseudo code for d-tolerant OMP (DtOMP)}\label{alg:modified_omp}
  \begin{algorithmic}[1]
    \Require{$y\in\C^M$, $S\in\N^+$, $\Phi\in\C^{M\times N}$, $d\in\N_0$}
    \Ensure{$d$-tolerant recovery $\ti{x}\in\C^N$}
  
    \Statex $k=0$, $x^{k}=0$, $\Sigma = \setbr{}$, $r^{(0)} = y$
    \Comment Initialization

    \While{$k \leq S$ and $\abs{\Sigma} < S$}
      \State $k = k+1$
      \State $b = \abs{\Phi^*r^{(k-1)}}$
      \State $\ti{n} = \argmax_{n\notin \clos{2d}{\Sigma} } \setbr{b_n}$
      \Comment Modification\label{algline:modified_omp_selection}
      \State $\Sigma = \Sigma \cup \setbr{\ti{n}}$
      \State $x^{(k)}_{\Sigma} = \Phi^\dagger_{\Sigma} y$
      \State $r^{(k)} = y-\Phi x^{(k)}$
    \EndWhile
    \State $\ti{x} = x^{(S)}$
  \end{algorithmic}
\end{algorithm}

In the algorithm, $x^{(k)}_{\Sigma}$ and $\Phi^\dagger_{\Sigma}$ are the reconstruction in the $k$th iteration restricted to the rows in the set $\Sigma$ and the 
Moore-Penrose pseudoinverse of $\Phi$ restricted to the columns with indices in $\Sigma$, respectively.

Through the exclusion of the $2d$-closures of the already recovered support, the algorithm will select at most one candidate per high correlation region. This 
modification is negligible within the scenario of signals with $(4d+1)$-spread support as all the numerical experiments for all tested parameter sets showed. 
For signals without spread support DtOMP fails at exact recovery by design due to its exclusion feature.

\subsection{Theory}\label{sec:sub_theory}
Here we present some results closely related to established results for recovery from coherent sampling.  These, like others in the literature, are for signals with spread support only. As our experiments seem to indicate, we conjecture this condition is only an artifact of the proofs, and further study should be performed to remove  this assumption.
The given theoretical reconstruction guarantees are a close analog to the ERC based incoherent OMP theory \cite{Tropp2004,Fannjiang2012}. The presented results can be 
understood as a characterization of OMP in the noiseless scenario of signals with sufficiently spread support.

Before we formulate the theorems we need to fix the notion of a $d$-approximate pair of sets $\setbr{\Sigma,\Gamma}_d$.

\begin{df}
  \label{df:d_approximate_pair_of_sets}
  Let $\Sigma, \Gamma\subset\N^+$ be sets and a $d\in\N_0$ be given. Then we have a $d$-approximate pair of sets $\setbr{\Sigma,\Gamma}_d$, if and only if
  \begin{equation}
    \Sigma \subseteq \clos{d}{\Gamma}\;\text{and}\;\Gamma \subseteq \clos{d}{\Sigma}\;,
  \end{equation}
  that is their distance in the Hausdorff-metric is at most $d$. We will call the set of all such pairs of sets containing at least one set of cardinality $S$, 
  $\mathcal{D}^S_d$.
\end{df}

The following theorem is the essential result ensuring recovery from noiseless measurements via the \hbox{$d$-tol}\-erant recovery condition (TRC). We do not 
present a proof here, since it follows similarly to previously established results \cite{Tropp2004,Fannjiang2012}. In particular, see Theorem 1 of \cite{Fannjiang2012} 
for a more general result that tolerates noise and is in terms of arbitrary \textit{coherence bands}, rather than \hbox{$d$-tol}\-erant recovery.

\begin{thm}[$d$-tolerant recovery guarantee without measurement noise]\label{thm:dtompv2_s_rec}
  Consider \eqref{eq:linear_sensing_model_w_noise} with $e=0$ and $\abs{x}_0~=~S$.
  The \hbox{$d$-tol}\-erant reconstruction of the signal can be guaranteed via DtOMP, Algorithm~\ref{alg:modified_omp}, if:
  \begin{subequations}
    \begin{align}
      &\supp{x}\text{ is $(4d+1)$-spread}\label{eq:dtompv2_s_rec_thm_spread} \displaybreak[1]\\
      &\mu_d(\Phi) \leq \text{const.} \ll 1\label{eq:dtompv2_s_rec_thm_mu_d}\\
      &\forall \setbr{A,B}_d\in\mathcal{D}^S_d:\enskip
          A\text{ is $(4d+1)$-spread},\;\ti{T}:=A\cup B,\notag\\
          &\hspace*{4em} T:=\clos{2d}{A}\;
          \lra\; \max_{j\in T^C}\norm{\Phi_{\ti{T}}^\dagger \phi_j}_1 < 1
        \tag{TRC}\label{eq:dtompv2_s_rec_thm_ratio}
    \end{align}
  \end{subequations}
  where $\setbr{A,B}_d$, $\mathcal{D}^S_d$ are given in Definition \ref{df:d_approximate_pair_of_sets} and $\mu_d(\Phi)$ is given in \eqref{eq:def_d_coherence}.
\end{thm}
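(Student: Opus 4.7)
The plan is to adapt Tropp's ERC-based induction for OMP (Theorem 3.1 of \cite{Tropp2004}, and more directly the coherence-band variant in Theorem~1 of \cite{Fannjiang2012}) to the $d$-tolerant setting. I would proceed by induction on the iteration index $k$ of DtOMP, carrying the invariant that the currently selected set $\Gamma_k$ and some $B_k\subseteq\supp{x}$ with $\abs{B_k}=k$ form a $d$-approximate pair $\setbr{\Gamma_k,B_k}_d$. The base case $k=0$ is vacuous, and at termination ($k=S$) the invariant reads $\setbr{\Gamma_S,\supp{x}}_d$, which is precisely $d$-tolerant recovery. The inductive step therefore reduces to showing that the index $\ti{n}_k$ selected in iteration $k$ lies in $\clos{d}{\setbr{u}}$ for some $u$ in the uncovered support $U:=\supp{x}\setminus B_{k-1}$, so that $B_k:=B_{k-1}\cup\setbr{u}$ preserves the invariant.

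Two preparatory facts feed this step. The first is a purely geometric lemma: combining $\Gamma_{k-1}\subseteq\clos{d}{B_{k-1}}$ (from the IH) with the $(4d+1)$-spread of $\supp{x}$ yields $\clos{2d}{\Gamma_{k-1}}\subseteq\clos{3d}{B_{k-1}}$, so each $u\in U$ lies strictly more than $d$ away from $\clos{2d}{\Gamma_{k-1}}$ and the $d$-closures about distinct elements of $U$ are pairwise disjoint; in particular, every $u\in U$ is admissible to DtOMP (not excluded by line~\ref{algline:modified_omp_selection} of Algorithm~\ref{alg:modified_omp}). The second is a standard residual identity: since $e=0$ and $x^{(k-1)}$ is the least-squares fit on $\Gamma_{k-1}$, the residual satisfies $r^{(k-1)}=y-\Phi x^{(k-1)}\in\text{range}\paren{\Phi_{\Gamma_{k-1}\cup\supp{x}}}$ and $\phi_\gamma^* r^{(k-1)}=0$ for every $\gamma\in\Gamma_{k-1}$.

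I would then invoke the TRC with $A:=\supp{x}$ (which is $(4d+1)$-spread by hypothesis and has $\abs{A}=S$) and $B:=\Gamma_{k-1}\cup U$. A direct check shows $\setbr{A,B}_d\in\mathcal{D}^S_d$, so with $\ti{T}=A\cup B=\Gamma_{k-1}\cup\supp{x}$ and $T=\clos{2d}{\supp{x}}$ the TRC delivers $\norm{\Phi_{\ti{T}}^\dagger\phi_j}_1<1$ for all $j\in\Tc$. Because $r^{(k-1)}\in\text{range}\paren{\Phi_{\ti{T}}}$, Tropp's identity $\phi_j^* r^{(k-1)}=\paren{\Phi_{\ti{T}}^\dagger\phi_j}^*\paren{\Phi_{\ti{T}}^* r^{(k-1)}}$ together with H\"older's inequality forces $\abs{\phi_j^* r^{(k-1)}}<\norm{\Phi_{\ti{T}}^* r^{(k-1)}}_\infty$ for every $j\in\Tc$. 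The right-hand side is attained on $\supp{x}$ (the $\Gamma_{k-1}$ part contributes zero), and the DtOMP exclusion removes $B_{k-1}\subseteq\clos{2d}{\Gamma_{k-1}}$; combined with the geometric lemma, this should force the restricted argmax to land in $U\subseteq\clos{d}{U}$, closing the induction.

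The main obstacle is the mismatch between the TRC-exclusion $T=\clos{2d}{\supp{x}}$ and the algorithmic exclusion $\clos{2d}{\Gamma_{k-1}}$: indices $n\in T\setminus\clos{2d}{\Gamma_{k-1}}$, namely those close to some $b\in B_{k-1}$ but far enough from every $\gamma\in\Gamma_{k-1}$, sit inside DtOMP's search space yet are not directly controlled by the TRC bound. Bridging this gap will require a separate estimate of $\abs{\phi_n^* r^{(k-1)}}$ for such $n$, exploiting the orthogonality $\phi_\gamma^* r^{(k-1)}=0$ on $\Gamma_{k-1}$ together with the assumption $\mu_d(\Phi)\ll 1$: since such $n$ are highly correlated with some nearby $\gamma\in\Gamma_{k-1}$ while only weakly correlated with support elements at distance greater than $d$, their inner product with $r^{(k-1)}$ can be controlled via the small $d$-coherence. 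Once this auxiliary estimate is in hand, the induction completes in $S$ steps and yields the claimed $d$-tolerant recovery.
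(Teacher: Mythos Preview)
Your approach is precisely what the paper intends: it does not present a proof of Theorem~\ref{thm:dtompv2_s_rec} but defers to Tropp's ERC induction \cite{Tropp2004} and its coherence-band extension in \cite{Fannjiang2012}, which is exactly the framework you set up, and your identification of the exclusion-set mismatch is the correct place to invoke condition~\eqref{eq:dtompv2_s_rec_thm_mu_d}.

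One refinement is needed. Your description of the uncontrolled indices as ``those close to some $b\in B_{k-1}$ but far enough from every $\gamma\in\Gamma_{k-1}$'' is incomplete: the set $T\setminus\clos{2d}{\Gamma_{k-1}}$ also contains the annulus $\clos{2d}{U}\setminus\clos{d}{U}$, whose elements are admissible to DtOMP, lie inside $T$ (so the TRC does not bound them), yet must be shown not to beat the uncovered support $U$ in the argmax. For such $n$ the $(4d+1)$-spread of $\supp{x}$ forces $n$ to sit at distance strictly greater than $d$ from \emph{every} index of $\ti{T}=\Gamma_{k-1}\cup\supp{x}$, so $\mu_d(\Phi)\ll 1$ again controls $\abs{\phi_n^* r^{(k-1)}}$---but through a direct $d$-coherence bound on each term in the expansion $r^{(k-1)}=\Phi_{\ti{T}}c$, not through proximity to some $\gamma$ with $\phi_\gamma^* r^{(k-1)}=0$ as in your sketch for the $B_{k-1}$ case. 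A related check you glossed over is that $\norm{\Phi_{\ti{T}}^* r^{(k-1)}}_\infty$ is in fact realized on $U$ rather than on $B_{k-1}$ (otherwise the TRC only compares $\Tc$ against a quantity excluded by DtOMP); this again follows from $\mu_d$ small together with $\phi_\gamma^* r^{(k-1)}=0$ for the $\gamma$ within $d$ of each $b\in B_{k-1}$. With these two additional estimates in place, your induction closes as stated.
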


The theorem allows to guarantee the \hbox{$d$-tol}\-erant recovery of any \hbox{$S$-sparse} signal from noiseless measurements using partially coherent sensing matrices. 
The 
original theory for OMP will fail for partially coherent sensing matrices since the ERC is usually not satisfied. In addition, given that the utilized sensing matrix has 
large $d'$-coherences ($d'<d$) in every \hbox{$d$-neigh}\-borhood, the reconstruction will naturally be also close in magnitude.

Note that many naturally arising sensing matrices such as overcomplete Fourier frames satisfy the condition of the theorem for some $d$. This can be seen in 
\figref{fig:correlation_measures} for the example of $\Psi$. The TRC will hold for any sufficiently small $\mu_d(\Phi)$ since the Hausdorff distance between $T^C$ and 
$\ti{T}$ is by construction larger then $d$. We also point out that \eqref{eq:dtompv2_s_rec_thm_mu_d} is primarily a lower bound on the minimal number of measurements 
$M$. This link is established using the Welch bound applied to all possible submatrices restricted to column indices that are \hbox{$(4d+1)$-spread}.

Continuing the theoretical construction as in \cite{Tropp2004}, one can ensure the TRC by imposing conditions on the cumulative coherence. 

\begin{thm}[\ref{eq:dtompv2_s_rec_thm_ratio} guarantee]\label{thm:dtompv2_dtrc_guarantee}
  Let $\Phi\in\C^{M\times N}$. Then the TRC
  holds for all $\setbr{A,B}_d\in\mathcal{D}^S_d$, if 
  \begin{align}
    \mu^C_d(\Phi, 2S-1)+\mu^C_d(\Phi, 2S) < 1 \;,\label{eq:dtompv2_dtrc_guarantee}
  \end{align}
  where $\setbr{A,B}_d$, $\mathcal{D}^S_d$ are given in Definition \ref{df:d_approximate_pair_of_sets} and $\mu^C_d(\Phi, \cdot)$ is given in 
  \eqref{eq:def_cummulative_d_coherence}.
\end{thm}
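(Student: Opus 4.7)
The plan is to adapt Tropp's classical ERC analysis \cite{Tropp2004}, with $\ti{T} := A \cup B$ playing the role of the support and the cumulative \hbox{$d$-co}\-herence replacing the standard cumulative coherence. Fix an arbitrary $\setbr{A,B}_d\in\mathcal{D}^S_d$ with $A$ being $(4d+1)$-spread, set $T := \clos{2d}{A}$, and pick any $j\in T^C$. Factoring $\Phi_{\ti{T}}^\dagger = G^{-1}\Phi_{\ti{T}}^*$ with $G := \Phi_{\ti{T}}^*\Phi_{\ti{T}}$ and applying submultiplicativity of the induced \lN{1}-operator norm splits the problem as
\[
  \norm{\Phi_{\ti{T}}^\dagger \phi_j}_1 \;\leq\; \norm{G^{-1}}_{1\to 1}\cdot\norm{\Phi_{\ti{T}}^*\phi_j}_1 \,,
\]
after which each factor is bounded separately.

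Next I would extract two geometric facts from the hypothesis. Since $A$ is $(4d+1)$-spread and $\setbr{A,B}_d$ is a $d$-approximate pair, each $b\in B$ lies within distance $d$ of a unique element of $A$; this yields $\ti{T}\subseteq\clos{d}{A}$, $\abs{\ti{T}}\leq 2S$, and, by the triangle inequality, $\abs{i-j}>d$ for every $i\in\ti{T}$ and every $j\in T^C$. Plugging the last fact into the definition of $\mu^C_d$ with $\Gamma=\ti{T}$ and test index $j$ bounds the numerator:
\[
  \norm{\Phi_{\ti{T}}^*\phi_j}_1 = \sum_{i\in\ti{T}}\frac{\abs{\scalarP{\phi_i}{\phi_j}_2}}{\norm{\phi_i}_2\norm{\phi_j}_2} \;\leq\; \mu^C_d(\Phi,\abs{\ti{T}}) \;\leq\; \mu^C_d(\Phi, 2S) \,.
\]

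For the inverse factor I would use a Neumann series: provided $\norm{I-G}_{1\to 1}<1$, $\norm{G^{-1}}_{1\to 1}\leq 1/(1-\norm{I-G}_{1\to 1})$. The induced \lN{1}-norm equals the maximum column sum $\max_{i\in\ti{T}}\sum_{k\in\ti{T}\setminus\setbr{i}}\abs{\scalarP{\phi_k}{\phi_i}_2}$, which has at most $\abs{\ti{T}}-1\leq 2S-1$ terms, and I would invoke $\mu^C_d$ with $\Gamma=\ti{T}\setminus\setbr{i}$ and test point $i$ to get $\norm{I-G}_{1\to 1}\leq \mu^C_d(\Phi,2S-1)$. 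Chaining with the numerator bound yields
\[
  \norm{\Phi_{\ti{T}}^\dagger \phi_j}_1 \;\leq\; \frac{\mu^C_d(\Phi, 2S)}{1-\mu^C_d(\Phi, 2S-1)} \;<\; 1,
\]
by the hypothesis~\eqref{eq:dtompv2_dtrc_guarantee}, and taking the supremum over $j\in T^C$ closes the TRC.

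The hard part is the off-diagonal step. Applying $\mu^C_d(\Phi,2S-1)$ requires the test point $i$ to be $d$-far from \emph{every} element of $\ti{T}\setminus\setbr{i}$, yet a partner pair $(a,b)\in A\times B$ with $\abs{a-b}\leq d$ can both sit in $\ti{T}$. To handle this I would decompose $\ti{T}=\bigsqcup_{\alpha\in A}C_\alpha$ with $C_\alpha := \ti{T}\cap\clos{d}{\setbr{\alpha}}$; because $A$ is $(4d+1)$-spread the clusters are pairwise separated by more than $2d$, so the inter-cluster contribution to each column sum is bounded cleanly by $\mu^C_d(\Phi, 2S-1)$, while the at most one intra-cluster ``partner'' term must be controlled using the DtOMP exclusion rule (which forces $B$ to be $2d$-spread, putting partners into disjoint \hbox{$d$-co}\-herence bands). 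Making this absorption quantitatively sharp so that no surplus term survives on the right-hand side of~\eqref{eq:dtompv2_dtrc_guarantee} is where the bulk of the technical work sits.
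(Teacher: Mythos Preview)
Your overall strategy is exactly the paper's: its proof is a two-line pointer telling the reader to rerun Tropp's Theorem~3.5 with $S$ replaced by $2S$ (since $\lvert\ti T\rvert=\lvert A\cup B\rvert\le 2S$) and $\mu^C$ replaced by $\mu^C_d$. You have merely unpacked that sketch, so on the level of method there is nothing to compare.

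Where you go beyond the paper is in flagging the off-diagonal Gram step, and the concern is legitimate. Bounding $\norm{I-G}_{1\to1}$ by $\mu^C_d(\Phi,2S-1)$ through Definition~\ref{def:mu_c_d} requires the test index $i$ to lie outside $\clos{d}{\ti T\setminus\{i\}}$, and that fails whenever a partner pair $(a,b)\in A\times B$ with $\lvert a-b\rvert\le d$ sits in $\ti T$---which is the generic situation here, since $B\subseteq\clos{d}{A}$. For the partially coherent matrices under consideration the intra-cluster correlation $\lvert\scalarP{\phi_a}{\phi_b}\rvert$ is close to~$1$ by design, so this term cannot be quietly absorbed. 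The paper's one-line proof simply does not engage with this point.

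Your proposed repair, however, does not close the gap. Invoking the DtOMP exclusion rule imports an algorithmic constraint into what Theorem~\ref{thm:dtompv2_s_rec} states as a purely combinatorial condition: the \ref{eq:dtompv2_s_rec_thm_ratio} is quantified over \emph{all} $\setbr{A,B}_d\in\mathcal D^S_d$, not only over pairs produced along a DtOMP trajectory. Even granting that $B$ is $2d$-spread, each cluster $C_\alpha$ still contains the pair $\{a,b\}$ with $\lvert a-b\rvert\le d$, so the offending term $\lvert\scalarP{\phi_a}{\phi_b}\rvert$ survives in the column sum for $i=a$ (and for $i=b$) and is not dominated by $\mu^C_d$. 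Your cluster decomposition handles the inter-cluster part cleanly, but the intra-cluster partner cannot be made to vanish without either strengthening the hypothesis (e.g.\ an additional bound on near-diagonal correlations, or restricting to $A=B$) or weakening the conclusion.
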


\begin{proof}\label{proof:dtompv2_dtrc_guarantee}
  The proof follows analogously to the proof of Theorem 3.5 in \cite{Tropp2004} with only minor modifications. Namely $S$ is replaced with $2S$, since the optimal 
  support of that theorem is replaced by the union $A\cup B$ of two $S$-cardinal sets, and $\mu^C(\Phi,\cdot)$ is replaced by $\mu^C_d(\Phi,\cdot)$.
\end{proof}

\begin{concl}\label{concl:dtompv2_dtrc_guarantee_renice}
  Equation \eqref{eq:dtompv2_dtrc_guarantee} of Theorem \ref{thm:dtompv2_dtrc_guarantee} holds, with the conditions stated, if any of the  
  following inequalities is satisfied:
    \begin{align}
      S &< \frac{1}{4}\paren{\mu_d(\Phi)^{-1} + 1}      \label{eq:dtompv2_dtrc_guarantee_renice_mu_d}\\
      \mu^C_d(\Phi, 2S) &< \frac{1}{2}     \;.       \label{eq:dtompv2_dtrc_guarantee_renice_mu_c_d}
    \end{align}
\end{concl}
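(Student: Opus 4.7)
The plan is to derive each of the two sufficient conditions directly from Theorem~\ref{thm:dtompv2_dtrc_guarantee} by combining the monotonicity properties of $\mu^C_d$ with a simple linear bound in terms of $\mu_d$.

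I would dispatch \eqref{eq:dtompv2_dtrc_guarantee_renice_mu_c_d} first, since it is the easier half. Because $\mu^C_d(\Phi, k)$ is monotonically increasing in $k$ by \eqref{eq:mu_c_d_monotony_wrt_k}, we have $\mu^C_d(\Phi, 2S-1) \leq \mu^C_d(\Phi, 2S)$, so the assumption $\mu^C_d(\Phi, 2S) < 1/2$ immediately yields
\begin{equation*}
  \mu^C_d(\Phi, 2S-1) + \mu^C_d(\Phi, 2S) \;\leq\; 2\,\mu^C_d(\Phi, 2S) \;<\; 1,
\end{equation*}
which is exactly hypothesis~\eqref{eq:dtompv2_dtrc_guarantee} of Theorem~\ref{thm:dtompv2_dtrc_guarantee}.

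For \eqref{eq:dtompv2_dtrc_guarantee_renice_mu_d} I would invoke the uniform linear bound $\mu^C_d(\Phi, k) \leq k\,\mu_d(\Phi)$. This follows directly from the definition~\eqref{eq:def_cummulative_d_coherence}: for any $\Gamma\subset\bracks{N}$ with $\abs{\Gamma}=k$ and any $i\notin\clos{d}{\Gamma}$, every term $\mu(\phi_i,\phi_j)$ appearing in the defining sum is bounded by $\mu_d(\Phi)$ since $j\in\Gamma$ is $d$-separated from $i$; summing $k$ such terms gives the claim. Substituting into \eqref{eq:dtompv2_dtrc_guarantee} yields
\begin{equation*}
  \mu^C_d(\Phi, 2S-1) + \mu^C_d(\Phi, 2S) \;\leq\; (2S-1)\mu_d(\Phi) + 2S\,\mu_d(\Phi) \;=\; (4S-1)\mu_d(\Phi),
\end{equation*}
and a brief rearrangement shows that $(4S-1)\mu_d(\Phi) < 1$ is equivalent to $S < \tfrac{1}{4}\paren{\mu_d(\Phi)^{-1}+1}$, completing this half.

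The only step where I would pause to check details is the justification of $\mu^C_d(\Phi, k) \leq k\,\mu_d(\Phi)$, because the definition of $\mu^C_d$ uses the non-wrapping neighborhood $\clos{d}{\Gamma}$, whereas the definition of $\mu_d$ via $\Gamma_d$ includes cyclic wrapping. For the Fourier-derived matrices considered throughout the paper, column correlations are invariant under cyclic shifts and the distinction is harmless; more generally one may either promote $\clos{d}$ to its cyclic counterpart or verify that no wrapping pair contributes to the extremal sum. With that minor technicality addressed, both implications reduce to the short computations above, so I expect no substantial obstacle.
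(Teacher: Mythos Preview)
Your proposal is correct and follows essentially the same approach as the paper: both halves use exactly the monotonicity of $\mu^C_d(\Phi,k)$ in $k$ for \eqref{eq:dtompv2_dtrc_guarantee_renice_mu_c_d} and the linear bound $\mu^C_d(\Phi,k)\leq k\,\mu_d(\Phi)$ for \eqref{eq:dtompv2_dtrc_guarantee_renice_mu_d}, with the same arithmetic $(4S-1)\mu_d(\Phi)<1$. Your closing remark about the mismatch between the cyclic definition of $\Gamma_d$ in $\mu_d$ and the non-wrapping $\clos{d}{\cdot}$ in $\mu^C_d$ is a genuine technical wrinkle that the paper's proof passes over silently, so you are in fact slightly more careful here.
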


\begin{proof}\label{proof:dtompv2_dtrc_guarantee_renice}
  Both conditions follow from the monotonic behavior of $\mu^C_d(\Phi,\cdot)$. Equation \eqref{eq:dtompv2_dtrc_guarantee_renice_mu_d} is proved using 
  \begin{equation}
    \mu^C_d(\Phi, k) \leq \mu_d(\Phi)k\;, \label{eq:lemma_coherences_lb_mu_d}
  \end{equation}
  which holds due to the monotonic increase of $\mu^C_d(\Phi, k)$ as $k$ decreases. So we have:
    \begin{align*}
      \mu^C_d(\Phi, 2S-1) + \mu^C_d(\Phi, 2S)
      &\stackrel{\mathclap{\eqref{eq:lemma_coherences_lb_mu_d}}}{\leq}
        \paren{2S-1 + 2S}\mu_d(\Phi) < 1\label{eq:proof_dtrc_guarantee_renice_mu_d} \\
      &\iff S < \frac{1}{4}\paren{\mu_d(\Phi)^{-1} + 1}\;.
    \end{align*}

  Equation \eqref{eq:dtompv2_dtrc_guarantee_renice_mu_c_d} follows immediately from the increasing behavior of $\mu^C_d(\Phi, k)$:
    \begin{align*}
      \mu^C_d(\Phi, 2S-1) + \mu^C_d(\Phi, 2S) &\leq 2\mu^C_d(2S,\Phi) < 1 \\%\label{eq:proof_dtrc_guarantee_renice_mu_c_d} \\
      &\iff \mu^C_d(2S,\Phi) < \frac{1}{2}\;,
    \end{align*}
  completing the proof.
\end{proof}

Both results given in the corollary are stronger than the original requirement but may be easier to verify. As is the case for OMP, 
\eqref{eq:dtompv2_dtrc_guarantee_renice_mu_d} is stronger than \eqref{eq:dtompv2_dtrc_guarantee_renice_mu_c_d}.  As a concrete example of a matrix that satisfies the conditions of the corollary, one could consider $F\_consecBegin$; when $N=1024$, $S=1$, and $M=64$, the conditions hold for $13\leq d \leq 23$. When $N=512$, $S=2$, and $M=64$, the conditions hold for $15\leq d \leq 20$.  Clearly these are not optimal conditions, but do provide a heuristic that holds for practical sensing matrices in certain parameter regimes. Moreover, we see tolerant support recovery for much broader ranges in the experiments.

\section{Conclusion and future directions}\label{sec:conclusion_and_future_directions}
We considered \hbox{$d$-tol}\-erant recovery and showed that in the low and noisy measurement regime, coherence in the sensing matrix is actually beneficial -- despite 
just the 
opposite in the classical recovery setting. We have taken first steps towards developing a framework and building a theoretical foundation for 
\hbox{$d$-tol}\-erant recovery. An empirical characterization of OMP for the purpose of \hbox{$d$-tol}\-erant recovery has been provided. It was backed for signals with 
sufficiently spread support theoretically through the interim modified OMP, termed DtOMP, which was found to be empirically the same as OMP in this setting. 
A comparison with simpler downsampling alternatives and the convex SPGL1 BPDN algorithm underlined our findings and showed best performance, both with respect to 
tolerant support and tolerant magnitude recovery, for partially coherent sensing matrices when paired with DtOMP. The modifications necessary for the ERC based OMP 
reconstruction guarantees were minimal. We introduced a modified version of the ERC, called TRC, with which we were able to prove \hbox{$d$-tol}\-erant recovery of 
arbitrary $S$-sparse signals with $(4d+1)$ spread support from noiseless measurements using partially coherent sensing matrices. For noisy recovery the classic proofs 
can not be easily extended.

Some future directions include:
(i) developing new prove strategies to proof recovery guarantees for the noisy measurement setting;
(ii) deriving theoretical guarantees of the \hbox{$d$-tol}\-erant reconstruction for signals without a spread support; 
(iii) analyzing the coherent sensing paradigm for other algorithms in order to improve the reconstruction performance; 
(iv) deriving a characterization of the phase transition of \hbox{$d$-tol}\-erant algorithms, to enable clear assertions whether partially  
coherent or incoherent sensing should be employed given the problem dimensions;
(v) further investigating how the smallest "optimal" value of $d$ relates to the problem dimensions and the coherence levels in the matrix in general;
(vi) the development of partially coherent sensing matrices specifically designed for particular applications.
(vii) a study utilizing a variant of the restricted isometry property may be illuminating in the context of tolerant recovery, see 
\cite{Fu2014} for some initial considerations in this direction.

\section*{Acknowledgments}
We would like to thank Omer Bar-Ilan, whose Master thesis inspired this work, the Faculty of Mathematics and Informatics of TU Bergakademie Freiberg for providing 
computational resources.

The work of T. Birnbaum was supported in part by the INTERFERE ERC Consolidator Grant. The work of Y. C. Eldar was supported in part by the European 
Union’s Horizon 2020 Research and Innovation Program through the ERC-BNYQ Project, and in part by the Israel Science Foundation under Grant 335/14. The work of D. 
Needell was supported by NSF CAREER grant $\sharp$1348721 and the Alfred P. Sloan Foundation.

\bibliographystyle{unsrturl}
\bibliography{CohSenPaper}{}

\end{document}